\newcommand{\qw}[1][-1]{\ar @{-} [0,#1]}
\newcommand{\qwx}[1][-1]{\ar @{-} [#1,0]}
\newcommand{\gate}[1]{*+<.6em>{#1} \POS ="i","i"+UR;"i"+UL **\dir{-};"i"+DL **\dir{-};"i"+DR **\dir{-};"i"+UR **\dir{-},"i" \qw}
\newcommand{\control}{*!<0em,.025em>-=-<.2em>{\bullet}}
\newcommand{\ctrl}[1]{\control \qwx[#1] \qw}
\newcommand{\targ}{*+<.02em,.02em>{\xy ="i","i"-<.39em,0em>;"i"+<.39em,0em> **\dir{-}, "i"-<0em,.39em>;"i"+<0em,.39em> **\dir{-},"i"*\xycircle<.4em>{} \endxy} \qw}
\newcommand{\Qcircuit}{\xymatrix @*=<0em>}
\def\ket#1{| #1 \rangle}
\def\bra#1{\langle #1 |}
\def\bk#1#2{\langle #1 |#2\rangle}
\def\cal#1{\mathcal{#1}}
\newcommand{\notprop}{\propto\kern-1\@ptsize pt \diagup}
\newtheorem{remark}{Remark}
\newtheorem{corollary}{Corollary}
\newtheorem{theorem}{Theorem}
\newtheorem{prop}{Proposition}
\newtheorem{claim}{Claim}
\newtheorem{lemma}{Lemma}
\newcommand{\etal}{\textit{et~al.}}
\newcommand{\ie}{\textit{i.e.}}
\begin{document}

\title{Classification of transversal gates in qubit stabilizer codes}

\author{Jonas T. Anderson}
\affiliation{D\'epartement de Physique, Universit\'e de Sherbrooke, Sherbrooke, Qu\'ebec, J1K 2R1, Canada}

\author{Tomas Jochym-O'Connor}
\affiliation{Institute for Quantum Computing, Department of Physics \& Astronomy, University of Waterloo, Waterloo, Ontario, N2L 3G1, Canada}

\begin{abstract}
This work classifies the set of diagonal gates that can implement a single or two-qubit transversal logical gate for qubit stabilizer codes. We show that individual physical gates on the underlying qubits that compose the code are restricted to have entries of the form~$e^{i \pi c/2^k}$ along their diagonal, resulting in a similarly restricted class of logical gates that can be implemented in this manner. Moreover, we show that all diagonal logical gates that can be implemented transversally by individual physical diagonal gates must belong to the Clifford hierarchy. Furthermore, we can use this result to prove a conjecture about transversal gates made by Zeng \etal~in 2007.
\end{abstract}

\maketitle

\section{Introduction}
Any physical realization of a quantum computing device will be subject to physical noise processes leading to potential computational errors. As such, quantum error correction is used to protect the information using multiple physical systems to encode a logical quantum state~\cite{Shor:1995a, Steane:1996b, Bennett:1996a, Knill:1997a}. Quantum error correction will play a central role in any fault-tolerant implementation of a quantum computer, yet it is of paramount importance that the fundamental quantum operations such as state preparation, error syndrome extraction and correction, state measurement, and state manipulation are done in a manner that does not propagate errors throughout the system~\cite{Shor:1996a, Preskill:1998c, Knill:1998a, Aharonov:1997a, Knill:2004a}. In this work we focus on state manipulation, or quantum gate application. Transversal gates, that is, logical gates that are a result of the application of individual local quantum gates on qubits forming the quantum error correcting code, provide the most natural form of fault-tolerant quantum logic. Therefore, developing quantum error correcting codes that have transversal gate sets are of prime importance for quantum fault-tolerance. However, as first shown by Zeng~\etal~for stabilizer codes~\cite{Zeng:2007a}, and then further generalized by Eastin and Knill~\cite{Eastin:2008a} for any quantum error correcting code, there exists no quantum error correcting code that has a set of universal transversal gates. Additionally, Bravyi and Kronig~\cite{Bravyi2013} showed that for a $D$-dimensional local stabilizer code with large distance that only gates from the Clifford Hierarchy at level $D-1$ (or lower) can be applied transversally. Their result also applies to more general local unitaries, not just transversal gates. 

With these constraints in mind, there has been a push in the research community towards methods to side-step these gate restrictions for a single quantum error correcting code. Techniques that allow for the fault-tolerant application of a set of universal quantum gates involve quantum code manipulation through gate fixing~\cite{Paetznick:2013a, Bombin:2013a}, partial transversality~\cite{Jochym:2014a}, or code conversion~\cite{Anderson:2014a}. While these results show promise, practical techniques for implementing fault-tolerant universal gate logic without having to use techniques such as magic state distillation, with its high qubit overhead, would be useful for further improvements. 

Recent results in the area of quantum gate decomposition have focused on expressing an arbitrary single-qubit quantum gate as a sequence of Hadamard ($H$) and $V$ gates, where $V = \text{diag}(1+2i, 1-2i)/\sqrt{5}$~\cite{Bocharov:2013a}. Therefore, the discovery of quantum error correcting codes that allow for the application of~$V$ in a transversal manner could potentially led to adaptation of the above mentioned techniques for universal fault-tolerant gate application without state distillation for these proposed gate decompositions.

Recently, a parallel work by Pastawski and Yoshida~\cite{Pastawski:2014a} showed many exciting results pertaining to fault-tolerant operations in topological stabilizer codes. They also proved that families of stabilizer codes with a finite loss threshold must have transversal gates in the Clifford hierarchy. Furthermore, they show that higher loss thresholds impose greater restrictions on the level in the Clifford hierarchy at which transversal gates can be implemented. Our result does not need a finite loss threshold or a family of codes to be applicable. Additionally, our result applies to transversal gates between two like codes; however, this result only applies to qubits and restricts the transversal gates to being in the Clifford hierarchy (our result does not specify the level).

The main result of our paper is that for quantum qubit stabilizer codes, the only diagonal gates that can be implemented transversally are those whose entries along the diagonal are of the form~$e^{i\pi c/2^k}$, for some power of~$k$ depending on the choice of code. This result holds both for single and two-qubit gates, and moreover we show that all such gates must be contained within the Clifford hierarchy. Moreover, as Zeng~\etal~ showed~\cite{Zeng:2007a}, any transversal non-trivial single-qubit logical gate for a qubit stabilizer code must result from the application of diagonal gates along with local Clifford operations and potential swapping of qubits. Therefore, our result classifies all transversal single-qubit logical gate operations up to local Clifford equivalences and relabelling of qubits. Additionally, our result classifies all transversal diagonal single-qubit logical gates that can map one stabilizer code to another stabilizer code. It is worth noting that the Reed-Muller family of quantum codes provides a means of implement any of these diagonal transversal gates, where changing to higher order in the code family allows for the implementation of diagonal logical gates with finer angles, all of which are in the Clifford hierarchy and of the form~$e^{i\pi c/2^k}$.

\section{Stabilizer codes and transversal logical gates}

\subsection{The stabilizer formalism}

We begin by reviewing the stabilizer formalism~\cite{Gottesman:1996a,Gottesman:1999b}. The Pauli matrices are defined as follows:
\begin{align*}
I = 
\begin{pmatrix}
1 & 0 \\
0 & 1 
\end{pmatrix}, \qquad
X = 
\begin{pmatrix}
0 & 1 \\
1 & 0 
\end{pmatrix}, \qquad
Y = 
\begin{pmatrix}
0 & -i \\
i & 0 
\end{pmatrix}, \qquad
Z = 
\begin{pmatrix}
1 & 0 \\
0 & -1 
\end{pmatrix}.
\end{align*}

The Pauli group on $n$ qubits~$\cal{P}_n$ is generated by the above Pauli matrices on each of the $n$~qubits. Given a set of independent commuting elements~$\{ P_1, \hdots, P_{n-k} \}$ from the Pauli group~$\cal{P}_n$, the group generated by these elements modulo overall phase factors~$\{1, i, -1, -i \}$, denoted~$\cal{S} = \langle G_1, \hdots , G_{n-k} \rangle$ is the~\textit{stabilizer} of a quantum code on $n$~qubits:~$Q = \{ \ket{\psi} \ | \ g\ket{\psi}  = \ket{\psi} \ \forall \ g \in \cal{S} \}$. The quantum code~$Q$ corresponds to the intersection of the~``+1"~eigenspaces of all of the~$(n-k)$ generators and has dimension size~$2^k$, that is, it will encode~$k$ logical qubits. Logical operators are the elements of the \textit{normalizer}~of~$\cal{S}$, $\cal{N}(\cal{S}) = \{ U \in U(2^n) \ | \ U\cal{S}U^\dagger = \cal{S} \}$, that are not trivially in the stabilizer~$\cal{S}$, that is, $\cal{N}(\cal{S}) / S$. The \textit{distance} of the code~$Q$ is defined as~$d = \text{min} \{ \text{wt}(P) \ | \ P \in \cal{P}_n, \ P \in \cal{N}(\cal{S}) / S \}$, where the \textit{weight}~$\text{wt}(P)$ is defined as the number of non-identity elements in the Pauli operator~$P$. An error-detecting quantum stabilizer code~$Q$ is any stabilizer code whose distance~$d\ge 2$. Throughout the remainder of this work, a stabilizer code will refer to an error-detecting quantum stabilizer code unless otherwise specified.

\subsection{Outline of proof}

In 2007, Zeng \etal~\cite{Zeng:2007a} showed that unitary, single-qubit logical transversal operators in qubit stabilizer codes were of the form:

\begin{equation}
U = L\left(\bigotimes_{j=1}^{n}\mbox{diag}(1,e^{i\pi\theta_j})\right)R^{\dagger}P_{\pi}.
\end{equation}
Here $L,R^{\dagger}$ are tensor products of local Clifford operations and $P_{\pi}$ is a coordinate permutation (a set of SWAP gates). Notice that if an $[[n,k,d]]$ stabilizer code exists which implements $U$ transversally, then up to local Clifford equivalences, an $[[n,k,d]]$ stabilizer code exists which implements $U=\bigotimes_{j=1}^{n}\mbox{diag}(1,e^{i\pi \theta_j})$ transversally. In this work we look at the restrictions on these diagonal, transversal gates. 

First, we prove that all diagonal gates are of the form $\mbox{diag}(1,e^{i\pi c/2^k})$ when $\theta_1=\theta_j\ \forall \ j$ for some natural number $k$. We prove this first for CSS codes and then for general stabilizer codes.

Then, we prove the case when $\theta_i$ not necessarily equal to $\theta_j$. We first show (proved in the appendix) that irrational angles must cancel each other out and therefore add nothing. We can then restrict to rational angles $\theta$. We prove a {\it decompression lemma} which allows us to reduce this case to the uniform-$\theta$ case, and our proof carries through as before. 

At this point, we have shown that all transversal, unitary gates on one codeblock are of the form  
\begin{equation}
U = L\left(\bigotimes_{j=1}^{n}\mbox{diag}(1,e^{i\pi\theta_j})\right)R^{\dagger},
\end{equation}
where $\theta_j = c_j/2^{k_j}$ and $c_j,\ k_j$ are integers. This is up to additional operators which cancel out and apply the logical identity operator and proves a conjecture made by Zeng \etal~that all transversal gates in qubit stabilizer codes are in the Clifford hierarchy. 

\section{Strongly Transversal $Z$ rotations}

Inspired by our previous discussion, we will focus on implementing rotations $Z(\theta)$, that is rotations about the $Z$-axis by some angle $\theta$. For qubits this rotation is given by a diagonal matrix
 
\begin{equation}
	A = 
    \begin{bmatrix}
    e^{i\pi\theta_1} & 0\\
    0 			   & e^{i\pi\theta_2}
    \end{bmatrix}
    =
    e^{i\pi\theta_1}\begin{bmatrix}
    1 & 0\\
    0 			   & e^{i\pi(\theta_2-\theta_1)}
    \end{bmatrix}.
\end{equation}
Up to a global phase, we need only consider rotations of the form
\begin{equation}
	A = 
	\begin{bmatrix}
    1 & 0\\
    0 			   & e^{i\pi\theta}
    \end{bmatrix}
    \equiv Z(\theta),
\end{equation}
where we are using the above equation as the definition of a single-qubit $Z(\theta)$~rotation of angle~$\pi \theta$ (we shall assume for the remainder of this work that the angular rotations are rational multiples of~$\pi$, as discussed in detail below).

In this work, we study constraints on transversal implementations of logical $Z(\theta)$ rotations. A transversal $Z$ rotation is defined as
\begin{equation}
Z_T(\theta) := Z(\theta_1)\otimes Z(\theta_2)\otimes...\otimes Z(\theta_n)
\end{equation}

Before considering the most general form of transversal gate outlined above, we first focus on the case when all physical qubits undergo the same rotation~$\theta$, that is, we require that the logical implementation be strongly transversal ($Z_L(\theta') = Z(\theta)^{\otimes n}$) with $n$ being the number of physical qubits. While each single-qubit rotation is a rotation by the same angle, we do not require that the logical $Z$ applies the same rotation to the logical qubit. 

\subsection{CSS codes}

A CSS code~\cite{Steane:1996a,Calderbank:1996a} is a stabilizer code~\cite{Gottesman:1996a,Gottesman:1999b} whose generators can be separated into two sets, the $X$~stabilizers composed of only Pauli~$X$ operators and the $Z$~stabilizers, that is~$\cal{S} = \langle G_{X_1}, \hdots, G_{|G_X|}, G_{Z_1}, \hdots, G_{|G_Z|} \rangle$, where $|G_X|$ and $|G_Z|$ refer to the number of~$X$ and $Z$~stabilizers, respectively.

\begin{theorem}
A nontrivial CSS code can have only strongly transversal $Z(\theta)$ rotations which are of the form $Z( a/2^k)$.  
\end{theorem}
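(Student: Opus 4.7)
The plan is to exploit the coset structure of CSS codewords. Writing $|\bar 0\rangle = |C_X|^{-1/2}\sum_{c\in C_X}|c\rangle$, where $C_X\subseteq\mathbb{F}_2^n$ is the binary linear code associated with the $X$-type stabilizers, the transversal rotation acts as
\[
Z(\theta)^{\otimes n}|\bar 0\rangle=\frac{1}{\sqrt{|C_X|}}\sum_{c\in C_X}e^{i\pi\theta|c|}|c\rangle.
\]
Because $Z(\theta)^{\otimes n}$ is diagonal, it preserves each coset of $C_X$ in $\mathbb{F}_2^n$; since distinct cosets correspond to orthogonal logical subspaces, the output must remain a scalar multiple of $|\bar 0\rangle$. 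Comparing amplitudes at $c=0$ forces $e^{i\pi\theta|c|}=1$, i.e.,
\[
\theta|c|\in 2\mathbb{Z}\qquad\text{for every }c\in C_X.
\]
Repeating the calculation on $|\bar 1\rangle$ and using $|c+\bar x|=|c|+|\bar x|-2|c\cap\bar x|$ yields the companion constraint $\theta\,|c\cap\bar x|\in\mathbb{Z}$ for every $c\in C_X$ and every bit-string $\bar x$ of a logical $\bar X$.

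Next I would leverage the linearity of $C_X$ to promote these weight conditions to constraints on intersections. Since $c_1+\cdots+c_m\in C_X$, the XOR inclusion--exclusion expansion
\[
|c_1+\cdots+c_m|=\sum_{\emptyset\neq S\subseteq[m]}(-1)^{|S|+1}\,2^{|S|-1}\,\Bigl|\bigcap_{i\in S}c_i\Bigr|,
\]
combined with $\theta|c_1+\cdots+c_m|\in 2\mathbb{Z}$ and induction on $|S|$, yields
\[
\theta\,\Bigl|\bigcap_{i=1}^{m}c_i\Bigr|\in 2^{2-m}\mathbb{Z}.
\]
Hence, if one can find a family of $m$ codewords of $C_X$ whose common support has weight exactly $1$---or, more generally, a weight that is a power of $2$---then $\theta\in 2^{2-m}\mathbb{Z}$ is dyadic, and the theorem follows upon writing $\theta=a/2^k$.

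The main difficulty is therefore combinatorial: showing that a nontrivial CSS code always admits such an isolating family. I would partition the $n$ qubits into equivalence classes through the functionals $\chi_i:C_X\to\mathbb{F}_2$, $c\mapsto c_i$; intersections of $C_X$-codewords can single out an equivalence class but cannot subdivide it. To break the remaining symmetries I would invoke the companion constraint $\theta|c\cap\bar x|\in\mathbb{Z}$, iterated analogously: applying inclusion--exclusion to $\bar x+c_1+\cdots+c_m$ gives $\theta\,|\bar x\cap c_1\cap\cdots\cap c_m|\in 2^{1-m}\mathbb{Z}$, now involving the extra separating vector $\bar x\in C_Z^\perp\setminus C_X$. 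Establishing that the combined system of stabilizer and logical-operator functionals can always isolate a single qubit---or at worst a set of qubits whose size is a power of $2$---for any nontrivial CSS code is the crux of the argument and the step I expect to be the main obstacle; it should follow from the error-detecting hypothesis $d\ge 2$ together with the fact that each logical operator represents a nontrivial coset in $C_Z^\perp/C_X$, giving enough variety in the functionals $\chi_i$ to resolve the ambiguity between sites.
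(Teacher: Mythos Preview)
Your setup is essentially the paper's: express $|\bar 0\rangle$ as a superposition over $C_X$, derive $\theta|c|\in 2\mathbb{Z}$ for all $c\in C_X$, then use the XOR inclusion--exclusion identity to convert weight conditions into intersection conditions $2^{m-1}\theta\,|c_1\wedge\cdots\wedge c_m|\in 2\mathbb{Z}$, together with the companion conditions involving $g_L$.  Up to this point you and the paper agree.

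Where you diverge is in the finishing move.  You propose to find a family $c_1,\dots,c_m$ (possibly augmented by $g_L$) whose common support has weight a power of $2$, and then read off that $\theta$ is dyadic.  This is the hard direction, and it need not be achievable: nothing prevents a nontrivial CSS code from having every column type in $S_X$ (and in $S_X$ together with $g_L$) repeated a number of times divisible by some odd integer, so that every intersection you can form has weight divisible by, say, $3$.  Your equivalence-class analysis correctly identifies that intersections can only pick out a class, not subdivide it; but you then need an extra structural fact about CSS codes to control the class sizes, and no such fact is available in general.

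The paper sidesteps this entirely by arguing by contradiction modulo an odd prime.  Write $\theta=p/q$ in lowest terms; if $q$ is not a power of $2$, pick an odd prime $q_o\mid q$ with $|g_L|\not\equiv 0\pmod{q_o}$ (if no such $q_o$ exists the logical gate already equals $Z(a/2^k)$ and there is nothing to prove).  Because $q_o$ is odd, the factors $2^{m-1}$ in the intersection conditions are units mod $q_o$, so the conditions become
\[
|g_{i_1}\wedge\cdots\wedge g_{i_s}\wedge g_L|\equiv 0\pmod{q_o}\quad\text{for every nonempty }\{i_1,\dots,i_s\}.
\]
Now classify the columns of $S_X$ by their $\{0,1\}^m$ pattern; the hypothesis $d\ge 2$ rules out the all-zero pattern, leaving $2^m-1$ types.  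Let $w_T$ be the number of $1$'s of $g_L$ sitting in columns of type $T$.  The displayed conditions give a $(2^m-1)\times(2^m-1)$ upper-triangular system in the $w_T$ over $\mathbb{Z}/q_o\mathbb{Z}$, forcing every $w_T\equiv 0$ and hence $|g_L|=\sum_T w_T\equiv 0\pmod{q_o}$, contradicting the choice of $q_o$.  This argument never needs to isolate a single qubit or control class sizes; it only needs $q_o$ odd so that the powers of $2$ drop out.  That is the missing idea in your proposal.
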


It is worth noting that Reed-Muller codes exist which have any $Z(1/2^k)$ gate transversally. Additionally, these gates are all in the Clifford hierarchy~\cite{Gottesman:1999d}. 

\begin{proof}
We can express the logical states for CSS codes as follows:

\begin{align}\label{eq:logcss0}
|0_L\rangle &= \frac{1}{2^{|G_X|/2}}\prod_{i}(I + G_{X_i})|0\rangle^{\otimes n},\\
\label{eq:logcss1}
|1_L\rangle &= X_L|0_L\rangle,
\end{align}
where $i$ runs over all $X$ stabilizer generators $G_{X_i}$. These are codestates, as~$\prod_{i}(I + G_{X_i})$ projects onto the codespace and the state~$\ket{0_L}$ must be an eigenstate of $Z_L$ since the logical operator must consist only of~$Z$ operators due to it being a CSS code.
 
To determine if a CSS code has a logical $Z$ rotation we only need to look at the $X$ stabilizers ($\mathcal{S}_X$) and $X$ logical operators ($\mathcal{L}_X$). (We are assuming that $X(Z)$ logical consists only of $X(Z)$ Pauli operators.) 

The constraints come from the following properties of logical $Z(\theta)$:  
\begin{equation}
Z_L(\theta')|0_L\rangle = \frac{1}{2^{|G_X|/2}}Z(\theta)^{\otimes n}\prod_{i}(I + G_{X_i})|0\rangle^{\otimes n} = |0_L\rangle,
\end{equation}

and

\begin{equation}
Z_L(\theta')|1_L\rangle = Z_L(\theta')X_L|0_L\rangle = Z(\theta)^{\otimes n} X_L |0_L\rangle = e^{i\pi\theta}X_L Z(\theta')^{\otimes n}|0_L\rangle = e^{i\pi\theta'}|1_L\rangle.
\end{equation}

Here we assume that the logical~$Z_L$ yields no global phase on the logical~$\ket{0_L}$ state. In general, a valid logical~$Z_L$ operation can be diagonal in the logical basis; however this additional freedom does not provide additional freedom in the choice of individual rotations~$Z(\theta)$, as shall be discussed at the conclusion of the proof.

We will find it more convenient to rewrite these constraints as
\begin{align*}
&Z(\theta)^{\otimes n}\prod_{i}(I + G_{X_i})|0\rangle^{\otimes n}\\
&= |g_0\rangle+Z(\theta)^{\otimes n}\left(\sum_{i_1}|g_{i_1}\rangle + \sum_{i_1<i_2}|g_{i_1}\oplus g_{i_2}\rangle + ... + \sum_{i_1<i_2<...<i_{|G_X|}}|g_{i_1}\oplus...\oplus g_{i_{|G_X|}}\rangle\right) \\
&= |g_0\rangle+\sum_{i_1}e^{i\theta|g_{i_1}|}|g_{i_1}\rangle + \sum_{i_1<i_2}e^{i\theta|g_{i_1} \oplus g_{i_2}|}|g_{i_1}\oplus g_{i_2}\rangle + ... +\\
& \sum_{i_1<i_2<...<i_{|G_X|}}e^{i\theta|g_{i_1}\oplus...\oplus g_{i_{|G_X|}}|}|g_{i_1}\oplus...\oplus g_{i_{|G_X|}}\rangle \\
&= |g_0\rangle+\sum_{i_1}|g_{i_1}\rangle + ... + \sum_{i_1<i_2<...<i_{|G_X|}}|g_{i_1}\oplus...\oplus g_{i_{|G_X|}}\rangle
\end{align*}

and

\begin{align*}
&Z(\theta)^{\otimes n}|1\rangle =\\
&e^{i\pi\theta |g_L|}\left(|g_L\rangle + \sum_{i}|g_L\oplus g_i\rangle + ... + \sum_{i_1<i_2<...<i_{|G_X|}}|g_L\oplus g_{i_1}\oplus...\oplus g_{i_{|G_X|}}\rangle\right).
\end{align*}

Here $g_0$ is the all-zeros string, $g_i(g_L)$ is a binary string corresponding to $G_{X_i}(X_L)$, $|g_L|$ is the hamming weight of $g_L$, and $\oplus$ corresponds to the bitwise XOR. For $Z(\theta)$ to be nontrivial we require that $\theta|g_L| \ne 0\bmod 2$. Rows of $G_{X_i}$ and $X_L$ can be expressed as binary strings with the association $X\rightarrow 1, I\rightarrow 0$.

Since each term in the above equations is a different binary string, the constraints must be satisfied independently.

The constraints on logical 0 give us

\begin{align*}
 \theta|g_{i_1}| &= 0 \bmod 2\\
 \theta|g_{i_1}\oplus g_{i_2}| &= 0 \bmod 2\\
 \vdots& \\
 \theta|g_{i_1}\oplus...\oplus g_{i_{|G_X|}}| &= 0 \bmod 2,\\
\end{align*}

while the constraints on logical 1 give us
\begin{align*}
 \theta|g_L| &= a \bmod 2\\
 \theta|g_L\oplus g_{i_1}| &= a \bmod 2\\
 \vdots& \\
\theta|g_L\oplus g_{i_1}\oplus...\oplus g_{i_{|G_X|}}| &= a \bmod 2\\
& \forall \ 0<i_1<i_2<...<i_{|G_X|}\le|G_X|.
\end{align*}

We begin by making some observations on the above equations to rule out certain values of~$\theta$. 

\begin{enumerate}
  \item First, notice that if $\theta$ is irrational these equations can never be satisfied since $n\theta=p \equiv 0\bmod 2\implies \theta=\frac{2t}{n}\in \mathbb{Q}$. We can therefore restrict our attention to rational angles ($\theta = \frac{p}{q} \in \mathbb{Q}$). Without loss of generality, we can assume this fraction is irreducible and in the range $(0,2]$.
  \item Notice that the value of $p\in \mathbb{Z}$ is not important; only whether it is even or odd.
  
  If $p$ is even we have $|\cdot| = 0\bmod q$, if $p$ is odd and $q$ is even we have $|\cdot| = 0\bmod q$, and if $p$ is odd and $q$ is odd we have $|\cdot| = 0\bmod 2q$. If $p$ and $q$ were both even this would violate our assumption that the fraction is irreducible. The case where both are odd is more restrictive and since we are ultimately trying to find the most general $\theta$ allowable, so we will assume $p$ is even. A proof of the other cases follows in the same manner. 
  \item We can express these constraints as conditions on overlap similarly to Bravyi and Haah~\cite{Bravyi:2012a} by noting that
  \begin{equation}
  	|g_{1}\oplus...\oplus g_{n}| = \sum_{i=1}^{n}|g_i| -2\sum_{i<j}|g_{i}\wedge g_{j}|+...+(-2)^{n-1}\sum_{i<...<n}|g_i\wedge...\wedge g_n|.
    \label{eq:BravyiHaah}
  \end{equation}
  Here $\wedge$ is the bitwise AND. 
\end{enumerate}

With these observations and the assumption that $p$ is even, we can express the constraints as
\begin{align*}
 |g_{i_1}| &= 0 \bmod q\\
 |g_{i_1}| + |g_{i_2}| -2|g_{i_1}\wedge g_{i_2}| &= 0 \bmod q\\
 \vdots& \\
 \sum_{i_1}^{n}|g_{i_1}| -2\sum_{i_1<i_2}|g_{i_1}\wedge g_{i_2}|+...+(-2)^{n-1}\sum_{i_1<...<i_{|G_X}}|g_{i_1}\wedge...\wedge g_{i_{|G_X|}}| &= 0 \bmod q\\
 \\
 |g_L| &= b \bmod q\\
 |g_L| + |g_{i_1}| -2|g_L\wedge g_{i_1}| &= b \bmod q\\
\vdots& \\
 \forall \ 0<i_1<i_2<...<i_{|G_X|}\le|G_X|&
\end{align*}

We can see that these equations are not independent since the requirement that $|g_{i}| = 0\bmod q$, implies 
\begin{equation}
|g_{i_1}| + |g_{i_2}| -2|g_{i_1}\wedge g_{i_2}| = 0 \bmod q \implies 2|g_{i_1}\wedge g_{i_2}| = 0 \bmod q. 
\end{equation}

Using this, we can express the above constraints as overlap conditions
\begin{align*}
 |g_{i}| &= 0 \bmod q, \forall 0<i\le|G_X|\\
 2|g_{i_1}\wedge g_{i_2}| &= 0 \bmod q\\
 4|g_{i_1}\wedge g_{i_2}\wedge g_{i_3}| &= 0 \bmod q\\
 \vdots& \\
 (2)^{|G_X|-1}|g_{i_1}\wedge...\wedge g_{i_{|G_X|}}| &= 0 \bmod q\\
  |g_{L}| &\ne 0 \bmod q\\
   2|g_{i_1}\wedge g_{L}| &= 0 \bmod q\\
 4|g_{i_1}\wedge g_{i_2}\wedge g_L| &= 0 \bmod q\\
 \vdots& \\
 (2)^{|G_X|}|g_{i_1}\wedge...\wedge g_{i_{|G_X|}} \wedge g_L| &= 0 \bmod q\\
 \forall 0<i_1<i_2<...<i_{|G_X|}\le|G_X|&,
\end{align*}
with $i_1,...,i_{|G_X|}$ now a sum over stabilizer generators ($g_i$) and $g_X$. We have also dropped the minus sign since it has no effect. For the logical operator to be nontrivial, we have assumed that $a,b\ne 0$. Notice that the $0 \bmod q$ conditions are independent constraints. 

Observe that if $q$ has only even prime factors (\ie~$q=2^t$ for some integer $t$) then all higher-order overlap conditions will, at some point, become trivial. For example, if $q=2^{k}$ overlap conditions will be trivial for any $k+1$ or more rows and Reed-Muller codes exist which have any $Z(1/2^k)$ gate transversally. In fact, since the transversal gates form a group, Reed-Muller codes exist which have any $Z(c/2^k)$ (where $c$ is an integer) gate transversally. Therefore the existence of transversal gates is already solved in the positive for that case. 

In what follows we will assume that $q$ has a least one odd prime factor and that $|g_L|\ne 0 \bmod q_o$ for at least one such $q_o$ (we will choose this $q_o$). As mentioned above, the case where $q$ has only even prime factors ($q=2^t$) is already solved. If $|g_L|=0\bmod q_o$ for all odd prime factors, then $Z(\theta)=e^{i\pi Z|g_L|/q}=e^{i\pi Z a/2^{k}}$ for some positive integer $k$. Here $a\equiv |g_L| \bmod 2^k$. In this case, the odd prime factors add nothing, and we could apply the same logical operator by using $Z(a/2^{k})$ instead of $Z(a/q)$. Since this case is already solved for, we assume $|g_L|\ne 0 \bmod q_o$ for at least one such $q_o$. Observe that if $q$ has at least one odd prime factor $q_o$, then all overlap conditions are nontrivial. We can write $q=q_o\cdot q_{P/o}$ where $q_{P/o}$ is the product of the other prime factors of $q$. Since $|g| = 0,1 \bmod q \implies |g| = 0,1 \bmod q_o$, we can write a weaker set of overlap conditions as

\begin{align*}
 |g_{i}| &= 0 \bmod q_o, \forall 0<i\le|G_X|\\
 |g_{i_1}\wedge g_{i_2}| &= 0 \bmod q_o\\
 |g_{i_1}\wedge g_{i_2}\wedge g_{i_3}| &= 0 \bmod q_o\\
 \vdots& \\
 |g_{i_1}\wedge...\wedge g_{i_{|G_X|}}| &= 0 \bmod q_o\\
  |g_{L}| &\ne 0 \bmod q_o\\
   |g_{i_1}\wedge g_{L}| &= 0 \bmod q_o\\
 |g_{i_1}\wedge g_{i_2}\wedge g_L| &= 0 \bmod q_o\\
 \vdots& \\
 |g_{i_1}\wedge...\wedge g_{i_{|G_X|}} \wedge g_L| &= 0 \bmod q_o\\
 \forall \ 0<i_1<i_2<...<i_{|G_X|}\le|G_X|&
\end{align*}

\begin{remark}
\normalfont
We made the assumption that the logical~$X$ operator was composed of a set of individual~$X$ operators on a collection of qubits characterized by the bit string~$g_L$, where $g_L(i) = 1$ if $X_L$ performs the operation~$X$ at qubit~$i$. However in theory, $X_L$ could also be comprised of $Z$~(or~$Y$) operations as well. A particular~$Z$~(or~$Y$) gate could introduce a phase on some of the state vectors in the expansion of the logical~$\ket{1_L}$, yet these phases must be preserved by the action of~$Z(\theta)^{\otimes n }$. Since these diagonal rotations will not change the form of the computational basis state, they will only introduce a phase. In that manner, the presence of $Z$~(or~$Y$) operations in the logical~$X_L$ gate will not change the set of algebraic conditions for the physical rotations~$Z(\theta)$.
\end{remark}

\begin{remark}
\normalfont
We made the assumption that the individual rotation on the physical qubits, $Z(\theta)$, were of the form~$\text{diag}(1, e^{i\theta})$, however in full generality the diagonal gates can be of the form~$\text{diag}(e^{i \varphi}, e^{i\theta})$. The resulting conditions on the transformation of the logical states~$\ket{0_L}$ and $\ket{1_L}$ will have the form:
\begin{align*}
Z(\theta)^{\otimes n}\ket{0_L} &= Z(\theta)^{\otimes n}\left( \ket{g_0} + \sum_{i_1}|g_{i_1}\rangle + \sum_{i_1<i_2}|g_{i_1}\oplus g_{i_2}\rangle + ... + \sum_{i_1<i_2<...<i_{|G_X|}}|g_{i_1}\oplus...\oplus g_{i_{|G_X|}}\rangle\right) \\
&= e^{i \varphi n} |g_0\rangle+\sum_{i_1}e^{i\theta|g_{i_1}| + i \varphi (n-|g_i|)}|g_{i_1}\rangle + \sum_{i_1<i_2}e^{i\theta|g_{i_1} \oplus g_{i_2}| + i\varphi (n-|g_{i_1} \oplus g_{i_2}|)}|g_{i_1}\oplus g_{i_2}\rangle + ... +\\
& \sum_{i_1<i_2<...<i_{|G_X|}}e^{i\theta|g_{i_1}\oplus...\oplus g_{i_{|G_X|}}|+ i \varphi(n-|g_{i_1}\oplus...\oplus g_{i_{|G_X|}}|)}|g_{i_1}\oplus...\oplus g_{i_{|G_X|}}\rangle \\
&= e^{i\varphi n }\left( \ket{g_0} + \sum_{i_1}|g_{i_1}\rangle + \sum_{i_1<i_2}|g_{i_1}\oplus g_{i_2}\rangle + ... + \sum_{i_1<i_2<...<i_{|G_X|}}|g_{i_1}\oplus...\oplus g_{i_{|G_X|}}\rangle\right)
\end{align*}
and
\begin{align*}
Z(\theta)^{\otimes n} \ket{1_L} = e^{i \varphi n + i(\theta-\varphi) |g_L| }\left( \ket{g_L} + \sum_{i_1}|g_L \oplus g_{i_1}\rangle + ... + \sum_{i_1<i_2<...<i_{|G_X|}}|g_L \oplus g_{i_1}\oplus...\oplus g_{i_{|G_X|}}\rangle\right).
\end{align*}
The constraints can then be shown to have the form:
\begin{align*}
 (\theta-\varphi)|g_{i_1}| &= 0 \bmod 2\\
 2(\theta-\varphi) |g_{i_1}\wedge g_{i_2}| &= 0 \bmod 2\\
 \vdots& \\
 2^{ |G_X| -1}(\theta-\varphi)|g_{i_1}\wedge \hdots \wedge g_{i_{|G_X|}}| &= 0 \bmod 2\\
 \\
 (\theta-\varphi)|g_L| &\ne 0 \bmod 2\\
 2(\theta-\varphi)|g_L\wedge g_{i_1}| &= 0 \bmod 2\\
 \vdots& \\
2^{|G_X|}(\theta-\varphi)|g_L\wedge g_{i_1}\wedge...\wedge g_{i_{|G_X|}}| &= 0 \bmod 2\\
& \forall 0<i_1<i_2<...<i_{|G_X|}\le|G_X|,
\end{align*}
which are the same constrains on the difference of the phases~$(\theta-\varphi)$ as the case when~$\varphi=0$. Therefore, an arbitrary global phase can be introduced on the individual rotations of the form~$\text{diag}(1, e^{i\theta})$ which are allowed in the CSS construction. 
\end{remark}

In what follows, we will attempt to find the smallest binary matrix (in terms of number of rows) which satisfies all overlap conditions.    

\subsubsection{Existence of binary matrix}\label{sssec:Existence}

We define a binary matrix $M$ with each row given by a binary string $g_i$. For $i\in\{1,...,n\}$, this is the binary string corresponding to an $X$ stabilizer generator. We will refer to this as the $X$ stabilizer submatrix, $S_X$. Notice that each row is independent. We refer to a code as nontrivial if the distance is at least 2. The $X$ stabilizer submatrix is said to be nontrivial if no columns containing only zero exist. This is a necessary and sufficient condition for $Z$ error detection. The remaining rows of $M$ are given by binary strings corresponding to $X$ logical operators. We will consider the case of a single logical operator and show that no nontrivial matrix $M$ with at least one $X$ logical operator exists, such that all rows satisfy the overlap conditions derived above~\footnote{This part of our proof uses techniques developed in \cite{Bravyi:2012a}}.

Let us now try to find the smallest number of rows in $S_X$ such that the overlap conditions are satisfied. In what follows, we start with the assumption that $|g_L|\ne 0\bmod q_o$ and derive a contradiction.

If $S_X$ is nontrivial, it must contain at least one row. If $S_X$ contains a single row ($g_1$), it must be the ``all-ones'' row and be a multiple of $q_o$, since $|g_1|=0\bmod q_o$. Also, $|g_1\wedge g_L|=0 \bmod q_o \implies |g_L|=0 \bmod q_o$ and hence a contradiction.

If $S_X$ is nontrivial and has two rows, all columns of $S_X$ are of one of three types:

\begin{equation}
	a = \begin{bmatrix}
    1\\0
    \end{bmatrix},
    b = \begin{bmatrix}
    0\\
    1 
    \end{bmatrix},
    c = \begin{bmatrix}
    1\\
    1 
    \end{bmatrix}.
\end{equation}

We will refer to the combination of all columns of type $a,b, c$, by the matrix $A,B,C$, respectively.

If we have a logical operator $g_L$, then 

\begin{align*}
 |g_1\wedge g_L| &= w_A + w_C = 0 \bmod q_o,\\
 |g_2\wedge g_L| &= w_B + w_C = 0 \bmod q_o,\\
 |g_1\wedge g_2\wedge g_L| &= w_C = 0 \bmod q_o,\\
 |g_L| &= w_A + w_B + w_C \ne 0 \bmod q_o.
\end{align*}

Here, $w_A$ is the overlap of $A$ and $g_L$. The first three constraints imply that $w_A, w_B, w_C = 0 \bmod q_o$ which imply $|g_L| = 0\bmod q_o$ and hence a contradiction. 

Now, if $S_X$ has three rows, we will have 7 independent ($0 \bmod q_o$) conditions on 7 variables ($w_A,...,w_G$),\begin{equation*}
\begin{array}{l|ccccccc}
& w_A & w_B & w_C &w_D &w_E &w_F &w_G\\
\hline
|g_1\wedge g_L| & 1 & 0 & 0 & 1 & 1 & 0 & 1\\
|g_2\wedge g_L| & 0 & 1 & 0 & 1 & 0 & 1 & 1\\
|g_3\wedge g_L| & 0 & 0 & 1 & 0 & 1 & 1 & 1\\
|g_1\wedge g_2 \wedge g_L| & 0 & 0 & 0 & 1 & 0 & 0 & 1\\
|g_1\wedge g_3 \wedge g_L| & 0 & 0 & 0 & 0 & 1 & 0 & 1\\
|g_2\wedge g_3 \wedge g_L| & 0 & 0 & 0 & 0 & 0 & 1 & 1\\
|g_1\wedge g_2 \wedge g_3 \wedge g_L|& 0 & 0 & 0 & 0 & 0 & 0 & 1\\
\end{array}
\end{equation*}

where

\begin{equation*}
	a = \begin{bmatrix}
    1\\0\\0
    \end{bmatrix},
    b = \begin{bmatrix}
    0\\1\\0 
    \end{bmatrix},
    c = \begin{bmatrix}
    0\\0\\1
    \end{bmatrix},
    d = \begin{bmatrix}
    1\\1\\0
    \end{bmatrix},
    e = \begin{bmatrix}
    1\\0\\1 
    \end{bmatrix},
    f = \begin{bmatrix}
    0\\1\\1
    \end{bmatrix},
    g = \begin{bmatrix}
    1\\1\\1
    \end{bmatrix}.
\end{equation*}

Therefore, the conditions on the overlap variables~$(w_A, \hdots, w_G)$ can be expressed as a matrix equation as follows, where the righthand vector is expressed $\bmod~q_o$:
\begin{align*}
\begin{pmatrix}
1 & 0 & 0 & 1 & 1 & 0 & 1\\
0 & 1 & 0 & 1 & 0 & 1 & 1\\
0 & 0 & 1 & 0 & 1 & 1 & 1\\
0 & 0 & 0 & 1 & 0 & 0 & 1\\
0 & 0 & 0 & 0 & 1 & 0 & 1\\
0 & 0 & 0 & 0 & 0 & 1 & 1\\
0 & 0 & 0 & 0 & 0 & 0 & 1
\end{pmatrix}
\begin{pmatrix}
w_A \\ w_B \\ w_C \\ w_D \\ w_E \\ w_F \\ w_G 
\end{pmatrix}
=
\begin{pmatrix}
0 \\ 0 \\ 0 \\ 0 \\ 0 \\ 0 \\ 0
\end{pmatrix}.
\end{align*}

This implies that~$w_i = 0 \bmod q_o  \ \forall \ i$, which as in the case of an $X$~generator matrix with two rows will contradict the assumption that~$|g_L| = \sum_i w_i \ne 0 \bmod q_o$.

Furthermore, if $S_X$ has $m$ rows we will have $2^{m}-1$ independent overlap constraints (the number of nonzero column vectors of size $m$) all requiring that a sum of weights $w_i$ must equal $0\bmod q_o$. There will be $2^{m}-1$ overlap variables $w_i$ which will each be forced to equal ($0 \bmod q_o$) to satisfy these constraints; we also have a constraint on the overall sum of these variables which must not be equal to ($0 \bmod q_o$). Therefore, no binary matrix with $k$ rows can satisfy all $k$-overlap conditions and have overlap which is not equal to $0 \bmod q_o$ with $g_L$.  

Our proof holds if additional logical operators are included, since these conditions must be satisfied by {\it each} logical operator and we showed that they cannot be satisfied by even a single logical operator. 
\end{proof}

It is worth noting that the restriction on the set of rotations that can be applied to the individual qubits of a CSS~code will impose a restriction on the set of logical rotations that can be applied. This shows a strong connection to the Clifford hierarchy. The Clifford hierarchy is defined recursively, where the first level of the hierarchy on $n$~qubits is defined as the Pauli operators on $n$~qubits, denoted~$\cal{C}_n^{(1)} = \cal{P}_n$. Higher levels~($k\ge 2$) of the Clifford hierarchy are then defined as follows:
\begin{align*}
\cal{C}_n^{k} = \{ U \in U (2^n) \ | \ U P U^{\dagger} \in \cal{C}_n^{(k-1)} \ \forall P \in \cal{P}_n \},
\end{align*}
that is, a unitary~$U$ in the $k$-th level of the Clifford hierarchy maps by conjugation the Pauli operators on $n$~qubits to an element in the $(k-1)$-th level of the Clifford hierarchy. Namely, the second level of the Clifford hierarchy is the Clifford operators, mapping Pauli operators to Pauli operators. It is worth noting that each level of the Clifford hierarchy contains all lower levels of the Clifford hierarchy, that is~$\cal{C}_n^{(p)} \subsetneq \cal{C}_n^{(q)}$, if $p<q$. 

\begin{prop}
Let $A=Z(\theta)$ be a diagonal single-qubit operator. If $\theta = c/2^k$, for any integer $k\ge 0$ where $\theta$ is in its most reduced form, then $A \in \cal{C}_1^{(k+1)}$. Otherwise, $A$~is not in the Clifford hierarchy, that is $A \notin C_1^{(k)}$ for all $k$.
\end{prop}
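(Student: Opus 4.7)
My plan is to prove the proposition by relating membership of $Z(\theta)$ at level $k+1$ to membership of $Z(2\theta)$ at level $k$, and then iterating. The engine is the single computation
\begin{equation*}
Z(\theta)\, X\, Z(\theta)^\dagger = e^{-i\pi\theta}\, Z(2\theta)\, X,
\end{equation*}
together with the trivial identity $Z(\theta)\, Z\, Z(\theta)^\dagger = Z$. Before using this, I would record two elementary closure facts that follow directly from the recursive definition of the hierarchy: global phases never affect membership (they cancel in conjugation), and if $U \in \cal{C}_1^{(k)}$ and $P \in \cal{P}_1$ then $UP \in \cal{C}_1^{(k)}$, because $(UP)Q(UP)^\dagger = U(PQP^\dagger)U^\dagger$ and $PQP^\dagger$ is again a Pauli.

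With these in hand, the key lemma is: for $k \geq 2$, $Z(\theta) \in \cal{C}_1^{(k)}$ if and only if $Z(2\theta) \in \cal{C}_1^{(k-1)}$. The only nontrivial conjugation constraint comes from $X$ (conjugation of $Z$ is trivial, and $Y = iXZ$ is handled by combining the $X$ and $Z$ cases), so the condition collapses to $e^{-i\pi\theta} Z(2\theta) X \in \cal{C}_1^{(k-1)}$; right-multiplying by $X$ and dropping the global phase yields $Z(2\theta) \in \cal{C}_1^{(k-1)}$, and the reverse direction runs the computation in the other order. Iterating the lemma $k-1$ times reduces the problem entirely to the base case $\cal{C}_1^{(1)} = \cal{P}_1$:
\begin{equation*}
Z(\theta) \in \cal{C}_1^{(k)} \iff Z(2^{k-1}\theta) \in \cal{P}_1.
\end{equation*}

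Both halves of the proposition then follow. For the forward direction, assume $\theta = c/2^k$ in reduced form: then $2^k \theta = c \in \mathbb{Z}$, so $Z(2^k \theta) \in \{I, Z\} \subseteq \cal{P}_1$ and hence $Z(\theta) \in \cal{C}_1^{(k+1)}$. For the converse I need the small base-case characterization $Z(\alpha) \in \cal{P}_1 \implies \alpha \in \mathbb{Z}$, which I expect to be the only real subtlety: because $Z(\alpha)$ is diagonal with $(1,1)$ entry equal to $1$ and the Pauli group carries only the phases $\{\pm 1, \pm i\}$, the only admissible representations are $\pm I$ or $\pm Z$ with trivial prefactor, forcing $e^{i\pi\alpha} = \pm 1$. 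Consequently, if $\theta$ is irrational or rational with an odd prime in its reduced denominator, then $2^{k-1}\theta \notin \mathbb{Z}$ for every $k \geq 1$, so $Z(\theta)$ lies in no level of the Clifford hierarchy.
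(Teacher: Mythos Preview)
Your proof is correct and follows essentially the same mechanism as the paper: iterate the conjugation of $X$ by the diagonal rotation until one lands in $\cal{P}_1$, then read off the constraint on $\theta$. The paper packages this as the recursion $A_p = A_{p-1} X A_{p-1}^{\dagger}$ and computes the resulting off-diagonal matrices explicitly, whereas you rewrite the single step as $Z(\theta) X Z(\theta)^{\dagger} = e^{-i\pi\theta} Z(2\theta) X$ and extract the clean biconditional $Z(\theta)\in\cal{C}_1^{(k)} \Leftrightarrow Z(2\theta)\in\cal{C}_1^{(k-1)}$; these are the same computation, and your explicit closure lemmas (phase invariance, right Pauli multiplication) make the ``if'' direction more transparent than the paper's sketch.
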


\begin{proof}
Consider the action of conjugation of the operator $A=Z(\theta)$ on the single qubit Pauli matrix~$X$, the action on Pauli~$Z$ is trivial due to the commutation of diagonal matrices. Consider the recursive construction of the matrices $A_p$ defined as: $A_p = A_{p-1} X A_{p-1}^{\dagger}$, where $A_0 = A$. Notice the following:
\begin{align*}
A_1 & = A_0XA_0^\dagger = 
\begin{pmatrix}
1 & 0 \\
0 & e^{i\pi\theta}
\end{pmatrix}
\begin{pmatrix}
0 & 1 \\
1 & 0
\end{pmatrix}
\begin{pmatrix}
1 & 0 \\
0 & e^{-i\pi\theta}
\end{pmatrix}
=
\begin{pmatrix}
0 & e^{-i\pi\theta} \\
e^{i\pi\theta} & 0
\end{pmatrix}, \\
A_2 &= A_1XA_1^\dagger = 
\begin{pmatrix}
0 & e^{-i\pi\theta} \\
e^{i\pi\theta} & 0
\end{pmatrix}
\begin{pmatrix}
0 & 1 \\
1 & 0
\end{pmatrix}
\begin{pmatrix}
0 & e^{-i\pi\theta} \\
e^{i\pi\theta} & 0
\end{pmatrix}
=
\begin{pmatrix}
0 & e^{-2i\pi\theta} \\
e^{2i\pi\theta} & 0
\end{pmatrix} ,\\
\vdots \\
A_p &= A_{p-1}XA_{p-1}^\dagger = 
\begin{pmatrix}
0 & e^{-2^p i\pi\theta} \\
e^{2^p i\pi\theta} & 0 
\end{pmatrix}.
\end{align*}
If $A \in \cal{C}_1^{(k+1)}$ for some $k \ge 0$, then by definition $A_1 \in \cal{C}_1^{(k)},\ A_2 \in \cal{C}_1^{(k-1)},\ \hdots, \ A_k \in \cal{C}^{(1)} = \cal{P}_1$. However, notice by the form of $A_k$ that $A_k = X \Leftrightarrow \theta = c /2^{k-1}$, $A_k = Y \Leftrightarrow \theta = c  /2^{k}$, and $A_k \ne Z \ \forall \ \theta$, where the angle~$\theta$ is in its most reduced form. 
\end{proof}

\begin{corollary}
Strongly transversal logical gates~$Z(\theta)^{\otimes n}$ on CSS~stabilizer codes must be composed of individual rotations that are an element of the Clifford hierarchy, that is~$Z(\theta) \in \cal{C}_1^{(k)}$, for some value of~$k$. Moreover, the logical gate that is implemented must also be an element of Clifford hierarchy on the logically encoded subspace.
\end{corollary}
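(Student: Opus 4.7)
The plan is to chain the two immediately preceding results without introducing new machinery. Theorem~1 establishes that in a nontrivial CSS code the only permissible strongly transversal single-qubit rotations are of the form $Z(a/2^k)$ for integers $a$ and $k\ge 0$. After reducing this fraction to lowest terms, Proposition~1 places every such rotation at some finite level of the single-qubit Clifford hierarchy $\cal{C}_1^{(\cdot)}$, which yields the first assertion of the corollary directly.

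For the second assertion I would recall the action of $Z(\theta)^{\otimes n}$ on the logical basis, already computed in the proof of Theorem~1: the state $|0_L\rangle$ is fixed and $|1_L\rangle$ acquires a phase $e^{i\pi\theta|g_L|}$, so the implemented logical gate is $Z_L(\theta')$ with $\theta' = \theta|g_L|$. Substituting $\theta = a/2^k$ and writing $a|g_L| = 2^s c$ with $c$ odd, $\theta'$ equals either $c/2^{k-s}$ in most reduced form when $s < k$, or an integer when $s \ge k$. In the first case Proposition~1, now applied on the logical qubit, puts $Z_L(\theta')$ at level $k-s+1$ of the Clifford hierarchy; in the second case $Z_L(\theta')$ is $\pm I$ or $\pm Z$ and hence lies in $\cal{C}_1^{(1)} = \cal{P}_1$. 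Either way the induced logical gate is an element of the Clifford hierarchy on the encoded qubit.

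The argument is essentially bookkeeping on top of Theorem~1 and Proposition~1; the only minor care needed is tracking the reduction to lowest terms of $\theta' = a|g_L|/2^k$ so that Proposition~1 is invoked with the correct exponent. I do not foresee a genuine obstacle, since the substantive work---constraining $\theta$ to be dyadic in the first place, and counting conjugations with $X$ to locate $Z(c/2^k)$ in the hierarchy---has already been carried out. If anything, the mildly delicate step is observing that reducing $a|g_L|/2^k$ modulo $2$ cannot increase the power of two in the denominator, so the level index produced by Proposition~1 on the logical side is bounded above by $k+1$, mirroring the physical bound.
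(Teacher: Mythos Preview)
Your proposal is correct and follows essentially the same approach as the paper's proof, which simply cites the preceding results for the first assertion and appeals to the computational-basis expansion of the logical states for the second. Your treatment of the logical angle $\theta' = \theta|g_L|$ and its reduction to lowest terms is a more explicit rendering of what the paper leaves implicit, but the underlying idea is identical.
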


\begin{proof}
The first statement follows from Propositions 1~and~2. The second statement follows from considering the action of the individual rotations on the logical states written out in their expansion in terms of the computational basis.
\end{proof} 

\subsection{Stabilizer codes}
\begin{prop}
A nontrivial qubit stabilizer codes can only have strongly transversal $Z$ rotations which are of the form $Z(a/2^k)$.  
\label{prop:StronglyTransversalStabilizer}
\end{prop}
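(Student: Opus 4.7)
My plan is to reduce the general stabilizer case to the CSS case already treated by Theorem~1, by showing that the transversality constraints depend only on the Pauli-$X$ supports of the stabilizer generators and logical operators, while $Z$-type and phase content only introduces $\pm 1$ amplitudes that $Z(\theta)^{\otimes n}$ preserves.

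First I would write each generator as $G_i = \epsilon_i X^{a_i} Z^{b_i}$ with $a_i, b_i \in \mathbb{F}_2^n$ and $\epsilon_i \in \{\pm 1, \pm i\}$, and choose a computational-basis seed $|s\rangle$ with $\Pi|s\rangle \neq 0$ (which exists since the codespace is nontrivial). Expanding the code projector $\Pi = \prod_i (I + G_i)/2$ on $|s\rangle$ then gives $|0_L\rangle \propto \sum_{T} \gamma_T |s \oplus a_T\rangle$ with $a_T = \bigoplus_{i \in T} a_i$ and $\gamma_T \in \{\pm 1\}$ determined by the $\epsilon_i$, $b_i$, and $s$. Applying $X_L$ (whose $X$-support is $a_L$, possibly with extra $Z/Y$ content contributing only signs as in the first Remark) likewise yields $|1_L\rangle \propto \sum_T \gamma_T' |s \oplus a_L \oplus a_T\rangle$.

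Next I would apply $Z(\theta)^{\otimes n}$ and observe that each basis component $|s \oplus a_T\rangle$ acquires the phase $e^{i\pi\theta|s \oplus a_T|}$ while the sign $\gamma_T$ is preserved; the equality $Z(\theta)^{\otimes n}|0_L\rangle = e^{i\pi\theta_0'}|0_L\rangle$ therefore reduces to $\theta(|s \oplus a_T| - |s|) \equiv \theta_0' \bmod 2$ for every $T$, and analogously for $|1_L\rangle$. Expanding each Hamming weight via the XOR identity~(\ref{eq:BravyiHaah}) and subtracting off the $|s|$-dependent constant produces overlap conditions on $\{a_1,\ldots,a_{n-k}\}$ and $a_L$ that are \emph{formally identical} to those in the CSS proof, under the correspondence $g_i \mapsto a_i$ and $g_L \mapsto a_L$. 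Nontriviality of the code (distance $\ge 2$) guarantees that the matrix with rows $a_i$ has no all-zero column, since such a column would mean a single-qubit $Z$ error commutes with every generator and is undetected; this plays the role of the nontriviality of $S_X$ used in Section~\ref{sssec:Existence}.

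The final step is to invoke the existence-of-binary-matrix argument verbatim on $\{a_i\}$ and $a_L$: if $\theta = p/q$ in lowest terms has any odd prime factor $q_o \mid q$, the overlap conditions force $|a_L| \equiv 0 \bmod q_o$, contradicting the nontriviality of the logical rotation, so $q$ must be a power of two and $\theta = a/2^k$. The main technical subtlety to address is that the $a_i$ need not be linearly independent in $\mathbb{F}_2^n$, since some product of generators may be purely $Z$-type with zero $X$-support; this is harmless because the overlap conditions are statements about the subgroup $\langle a_1,\ldots,a_{n-k}\rangle \le \mathbb{F}_2^n$ and hold for any generating set of it, while purely-$Z$ stabilizers contribute only to the signs $\gamma_T$ that have already been shown to drop out of the weight constraints.
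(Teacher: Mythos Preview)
Your proposal is correct and takes essentially the same approach as the paper: extract the $X$-supports $a_i$ of the generators and $a_L$ of $X_L$, note that the $Z$-type and phase content contributes only amplitudes that $Z(\theta)^{\otimes n}$ preserves, and reduce the resulting weight constraints to those of the CSS case so that the binary-matrix argument of Section~\ref{sssec:Existence} applies verbatim. One small point to tighten in your write-up: with a general seed $|s\rangle$ the constraint is on the \emph{signed} weight $\sum_j(-1)^{s_j}x_j$ rather than $|x|$ (so ``subtracting off the $|s|$-dependent constant'' is not quite literal), but since this signed weight still satisfies identity~(\ref{eq:BravyiHaah}) the overlap argument carries through unchanged---the paper sidesteps this by taking $s=0$ without loss of generality.
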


Given a stabilizer code with a set of generators~$\{ G_i \}_{i=1}^k$,we can project onto the stabilizer codespace~$\mathcal{C}_{\mathcal{S}}$ of the code by applying the projection operator~$\prod_i (I + G_i)$ to a given state of the $n$-qubit Hilbert space, 
  
\begin{equation}\label{eq:logstab}
|\psi_L\rangle = \frac{1}{2^{k/2}} \prod_{i = 1}^k(I + G_{i})|0\rangle^{\otimes n}.
\end{equation}

It is also worth pointing out that we assume that the state~$\ket{0}^{\otimes n}$ is not orthogonal to the stabilizer codespace. This assumption can fail, however as there will always exist a state in the computational basis that is not orthogonal to~$\mathcal{C}_{\mathcal{S}}$, we make this assumption without loss of generality as the remainder of the proof would be identical by replacing~$\ket{0}^{\otimes n}$ with such a state. Before we begin the formal proof of Proposition~\ref{prop:StronglyTransversalStabilizer}, we will present a few useful results.

\begin{lemma}
\label{lem:StabilizerCodewords}
Given a set of $n$-qubit Pauli operators~$\langle G_i \rangle_{i=1}^{n-k}$ forming a stabilizer code~$\cal{\cal{S}}$, and logical Pauli operator~$X_{L,j}$, $Z_{L,j}$ for $1 \le j \le k$ satisfying $\left[X_{L,j}, Z_{L,l}\right] = \delta_{jl}$, then there exists a set of $2^k$ orthonormal states of the following form:
\begin{align}
\label{eq:StabilizerCodeword}
\ket{\psi_m} = \sum_l i^{a_{m,l}} \ket{m_l},
\end{align}
where $a_{m,l}$ is an integer and $m_l$ is an $n$-bit binary string (these states will form a basis for the logical state space). Moreover, two different states cannot share any elements in the computation basis expansion. More precisely, given $\ket{\psi_p}, \ \ket{\psi_q}$ such that $p \ne q$ then $\bk{p_s}{q_t} = 0 \ \forall \ s,t$.
\end{lemma}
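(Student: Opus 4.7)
My approach is to construct the $2^k$ states as simultaneous $Z_L$-eigenstates in the codespace, read off the required form from the projector expansion, and derive the disjoint-support property from the anticommutation relations among the logical operators.

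First I would observe that $\mathcal{G}=\langle G_1,\ldots,G_{n-k},Z_{L,1},\ldots,Z_{L,k}\rangle$ is a maximal abelian subgroup of the $n$-qubit Pauli group: it is abelian because each $Z_{L,j}$ lies in $\mathcal{N}(\mathcal{S})$ and the $Z_{L,j}$ commute with one another, and it has $2^n$ elements modulo phases. Hence $\mathcal{G}$ has a one-dimensional common $+1$ eigenspace. Picking a computational basis state $\ket{s}$ not orthogonal to this eigenspace, I define
\begin{equation*}
\ket{\psi_0} \;\propto\; \prod_{i=1}^{n-k}(I+G_i)\prod_{j=1}^{k}(I+Z_{L,j})\,\ket{s},
\qquad
\ket{\psi_m}\;=\;X_L^{\vec m}\,\ket{\psi_0}\ \text{for}\ \vec m\in\{0,1\}^k,
\end{equation*}
where $X_L^{\vec m}=\prod_j X_{L,j}^{m_j}$.

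Next I would expand $\ket{\psi_0}\propto\sum_{g\in\mathcal{G}} g\ket{s}$. Every Pauli $g$ acts on a computational basis state as $g\ket{s}=i^{a_g}\ket{s\oplus x_g}$, with $x_g$ the $X$-part of $g$ and $a_g$ an integer. Distinct $g$ can land on the same basis string, but they then differ by a purely $Z$-type element of $\mathcal{G}$; the non-orthogonality hypothesis ensures those pure-$Z$ contributions add with matching signs, leaving a single coefficient of the form $i^{a_{0,l}}$ on each string $\ket{m_l}$ that appears (up to an overall positive normalization). Since $X_L^{\vec m}$ is itself a Pauli, the map $\ket{\psi_0}\mapsto\ket{\psi_m}$ merely shifts the strings $m_l$ and adjusts the integer exponents, so every $\ket{\psi_m}$ has the required form $\sum_l i^{a_{m,l}}\ket{m_l}$. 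Orthogonality then follows because $\ket{\psi_m}$ is a simultaneous $(-1)^{m_j}$-eigenstate of $Z_{L,j}$ via $[X_{L,j},Z_{L,l}]=\delta_{jl}$, and distinct eigenvalue patterns give orthogonal eigenstates.

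The heart of the argument, and the step I expect to be the main obstacle, is the disjoint-support claim. Suppose $\ket{\psi_p}$ and $\ket{\psi_q}$ with $\vec p\ne\vec q$ share some basis string $\ket{a}$. Then $a\oplus x_p$ and $a\oplus x_q$ both lie in the support of $\ket{\psi_0}$, which is exactly the set of $X$-parts of elements of $\mathcal{G}$. Consequently $x_p\oplus x_q$, the $X$-part of $X_L^{\vec p\oplus \vec q}$, equals the $X$-part of some $g\in\mathcal{G}$, so $g\,X_L^{\vec p\oplus\vec q}$ is a purely $Z$-type element of $\mathcal{N}(\mathcal{S})$. Such an operator commutes with every $Z_{L,l}$; since $g\in\mathcal{G}$ also commutes with every $Z_{L,l}$, it follows that $X_L^{\vec p\oplus \vec q}$ must commute with every $Z_{L,l}$, contradicting its anticommutation with $Z_{L,j}$ for any $j$ with $(\vec p\oplus\vec q)_j=1$. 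The subtlety is bookkeeping the $X$-parts against the commutation structure carefully enough to derive this contradiction, especially outside the CSS setting where the $Z_{L,j}$ may carry $X$ components.
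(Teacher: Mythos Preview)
Your overall strategy is more direct than the paper's iterative case analysis, but the disjoint-support step contains a genuine gap, not just the bookkeeping subtlety you flag. The claim ``a purely $Z$-type element of $\mathcal{N}(\mathcal{S})$ commutes with every $Z_{L,l}$'' is simply false whenever some $Z_{L,l}$ carries a nontrivial $X$-part, and in that situation your construction can fail outright. Concretely, take the $[[4,2,2]]$ code with stabilizers $XXXX,\,ZZZZ$ and the legitimate logical operators $X_{L,1}=ZIZI$, $Z_{L,1}=XXII$, $X_{L,2}=XIXI$, $Z_{L,2}=ZZII$. Then $\mathcal{G}=\langle XXXX,ZZZZ,XXII,ZZII\rangle$, and since $X_{L,1}$ is diagonal, $\ket{\psi_{(1,0)}}=X_{L,1}\ket{\psi_0}$ has exactly the same computational-basis support as $\ket{\psi_0}$. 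Here the diagonal $h$ you construct genuinely exists and anticommutes with $Z_{L,1}$; no contradiction follows, and your argument cannot be completed for these given logical operators.

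The lemma is still true because its conclusion does not reference the particular $X_{L,j},Z_{L,j}$ handed to you, so you are free to replace them. One clean repair of your approach: choose the maximal abelian $\mathcal{G}\supset\mathcal{S}$ so that its diagonal subgroup $\mathcal{G}_Z$ is as large as possible; then if a diagonal $h=g^{-1}X_L^{\vec r}\notin\mathcal{G}$ existed, $\langle\mathcal{S},\mathcal{G}_Z,h\rangle$ would be abelian (as $h\in\mathcal{N}(\mathcal{S})$ and diagonal operators commute) and would extend to a maximal abelian group containing $\mathcal{S}$ with strictly larger diagonal part, contradicting the choice of $\mathcal{G}$. The paper takes a different route: it builds the basis one state at a time, at each stage applying whichever of $X_{L,j}$ or $Z_{L,j}$ lands on fresh computational-basis support and redefining the logical Paulis when neither does directly, thereby never committing to a fixed $\mathcal{G}$ up front.
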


\begin{proof}
There must exist at least one computational basis state that has non-zero overlap with the stabilizer codespace~$\cal{C}_{\cal{S}}$. Without loss of generality, we assume that~$\ket{0}^{\otimes n}$ is such a state. Then, the following state is a codestate of~$\cal{C}_{\cal{S}}$,
\begin{align*}
\ket{\phi} = \dfrac{1}{2^{(n-k)/2}} \prod_{i=1}^{n-k} (I+G_i) \ket{0}^{\otimes n} &= \dfrac{1}{2^{(n-k)/2}} \sum_i S_i \ket{0}^{\otimes n} \\
&= \dfrac{1}{2^{(n-k)/2}} \sum_i \ket{s_i},
\end{align*}
where we have defined the state~$\ket{s_i} = S_i \ket{0}^{\otimes n}$. Consider the action of two anti-commuting logical Pauli operators~$X_{L,1}$, $Z_{L,1}$, on the state~$\ket{\phi}$. We know that~$\ket{\phi}$ cannot be an eigenstate of both operators, as no state can be a joint eigenstate of two anti-commuting operators. Therefore, we can consider the following two cases: either $\ket{\phi}$ is an eigenstate of one of the operators, or~$\ket{\phi}$ is not an eigenstate of either operator. We shall consider the case of the former first.

Without loss of generality, assume that $Z_{L,1} \ket{\phi} = \ket{\phi} = \ket{\psi_1}$ and~$X_{L,1} \ket{\phi} = \ket{\psi_2} \ne \alpha\ket{\psi_1}$ (where~$\alpha$ is a global phase). Consider the action of~$X_{L,1}\ket{\phi}$:
\begin{align*}
X_{L,1} \ket{\phi} &= \dfrac{1}{2^{(n-k)/2}} \prod_{i=1}^{n-k} (I+G_i) X_{L,1} \ket{0}^{\otimes n} \\
&= \dfrac{1}{2^{(n-k)/2}} \sum_i S_i \ket{g_{X_{L,1}}},
\end{align*}
if $\ket{g_{X_{L,1}}} = \ket{s_j}$ for some~$j$ then after the action of the sum over stabilizer operators, the final state~$X_{L,1} \ket{\phi} = \alpha \ket{\phi}$ would be a contradiction. Therefore, the state~$\ket{g_{X_{L,1}}} $ must be a computational basis state that is not present in the expansion of~$\ket{\phi}$, and moreover, each element of the state~$\ket{\psi_2}$ must have zero overlap with the state~$\ket{\psi_1}$,
\begin{align*}
\ket{\psi_2} = X_{L,1} \ket{\phi} =  \dfrac{1}{2^{(n-k)/2}} \sum_i  \ket{g_{X_{L,1}} \oplus s_i}.
\end{align*}
Therefore, two states of the form of Equation~\ref{eq:StabilizerCodeword} have been constructed. Consider now the action of the next pair of anti-commuting logical Paulis~$X_{L,2}$, $Z_{L,2}$ on the state~$\ket{\psi_1}$. Again, since $\ket{\psi_1}$ cannot be a joint eigenstate of both operators, without loss of generality, assume $X_{L,2} \ket{\psi_1} = \ket{\psi_3} \ne \alpha \ket{\psi_1}$. Moreover, it must be that~$\ket{\psi_3} \ne \alpha \ket{\psi_2}$ or else the following would be true: $X_{L,1} X_{L,2} \ket{\psi_1} = \alpha X_{L,1} \ket{\psi_2} = \alpha \ket{\psi_1}$, which would imply that~$\ket{\psi_1}$ is an eigenstate of two anti-commuting operators, $Z_{L,1}$ and $X_{L,1}X_{L,2}$, which results in a contradiction. Therefore, we can express the state~$\ket{\psi_3}$ as follows:
\begin{align*}
\ket{\psi_3} = X_{L,2} \ket{\psi_1} &= \dfrac{1}{2^{(n-k)/2}} \prod_{i=1}^{n-k} (I+G_i) X_{L,2} \ket{0}^{\otimes n} \\
&=  \dfrac{1}{2^{(n-k)/2}} \sum_i  \ket{g_{X_{L,2}} \oplus s_i},
\end{align*}
where each state in the computational basis expansion must have zero overlap with the states~$\ket{\psi_1}$~and~$\ket{\psi_2}$. Finally. consider the action of the same anti-commuting pair on the state~$\ket{\psi_2}$. As will be shown below, it does not matter which we choose, and thus, without loss of generality, we assume it to be the state~$Z_{L,2}$. First note that if $Z_{L,2} \ket{\psi_2} = \alpha \ket{\psi_1}$, then $\ket{\psi_1}$ would be the joint eigenstate of two anti-commuting Paulis, $Z_{L,1}$ and~$X_{L,1} Z_{L,2}$, which is a contradiction. Moreover, if $Z_{L,2} \ket{\psi_2}  = \alpha \ket{\psi_3}$ then again $\ket{\psi_1}$ would be the joint eigenstate of two anti-commuting Paulis, $Z_{L,1}$ and~$X_{L,1} X_{L,2} Z_{L,2}$. Therefore, the state~$Z_{L,2} \ket{\psi_2}= \ket{\psi_4}$ must have zero overlap with the previous established states and can be expressed as follows:
\begin{align*}
\ket{\psi_4} = Z_{L,2} \ket{\psi_2} &= \dfrac{1}{2^{(n-k)/2}} \prod_{i=1}^{n-k} (I+G_i) Z_{L,2} X_{L,1} \ket{0}^{\otimes n} \\
&=  \dfrac{1}{2^{(n-k)/2}} \sum_i  \ket{g_{X_{L,1}} \oplus g_{Z_{L,2}} \oplus s_i}.
\end{align*}

Notice the form of~$\ket{\psi_3}$ and~$\ket{\psi_4}$. By taking the previous states~$\ket{\psi_1}$ and~$\ket{\psi_2}$ and a pair of non-commuting logical Paulis, for each state in the previous level, we can construct a new state by applying the logical Pauli for which it is not an eigenstate. One can continue the same constructive process for preparing states of the form of Equation~\ref{eq:StabilizerCodeword} by taking the $m$-th pair of anti-commuting logical operators and the $2^{m-1}$ previous constructed states, thereby creating another~$2^{m-1}$ set of orthogonal states, following similar constraints as laid out above. Applying this to all pairs of logical Pauli gates for the given code, $2^k$ basis states for the codespace can be constructed.

In the case when the state~$\ket{\phi}$ is not an eigenstate of either of the first two logical Pauli gates, $Z_{L,1}$ and $X_{L,1}$, the following modifications have to be made. Let~$\ket{\psi_1} = Z_{L,1} \ket{\phi}$ and $\ket{\psi_2} = X_{L,1} \ket{\phi}$. If $\ket{\psi_1} = \ket{\psi_2}$ then by redefining the logical Pauli $\tilde{Z_{L,1}} = X_{L,1} Z_{L,1}$ and $\ket{\tilde{\psi_1}} = \ket{\phi}$, we recover the original case where~$\tilde{Z_{L,1}} \ket{\tilde{\psi_1}} = \ket{\tilde{\psi_1}} = \ket{\phi} $ and $\ket{\psi_2} = X_{L,1} \ket{\phi}$. Therefore, the final case to consider is where~$\ket{\psi_1} \ne \alpha \ket{\psi_2}$. In this case, they must not have overlapping states in the computational basis, and their expansion can be written as follows:
\begin{align*}
\ket{\psi_1} = Z_{L,1} \ket{\phi} =  \dfrac{1}{2^{(n-k)/2}} \sum_i  \ket{g_{Z_{L,1}} \oplus s_i},\\
\ket{\psi_2} = X_{L,1} \ket{\phi} =  \dfrac{1}{2^{(n-k)/2}} \sum_i  \ket{g_{X_{L,1}} \oplus s_i}.
\end{align*}
Again, as in the previous case, consider the action of the pair of logical Pauli gates~$X_{L,2}$ and~$Z_{L,2}$ on the state $\ket{\psi_1}$. Without loss of generality, assume that $\ket{\psi_1}$ is not an eigenstate of~$X_{L,2}$. Unlike the previous case, it is now possible that~$X_{L,2} \ket{\psi_1} = \ket{\psi_2}$. However, if this holds, then redefining $\tilde{Z_{L,1}} = X_{L,1} Z_{L,1} X_{L,2}$ and $\ket{\tilde{\psi_1}} = \ket{\phi}$ we recover the original case with~$\tilde{Z_{L,1}} \ket{\tilde{\psi_1}} = \ket{\tilde{\psi_1}}$ and $X_{L,2} \ket{\tilde{\psi_1}} = \ket{\psi_3}$, as well as all redefined operators satisfying the appropriate commutation relations. Otherwise, we can conclude that~$X_{L,2} \ket{\psi_1} = \ket{\psi_3}$ and must be orthogonal to the two previous states as well as have the following form:
\begin{align*}
\ket{\psi_3} = X_{L,2} \ket{\psi_1} =  \dfrac{1}{2^{(n-k)/2}} \sum_i  \ket{g_{Z_{L,1}} \oplus g_{X_{L,2}} \oplus s_i}.
\end{align*}
Therefore, continuing in the same manner as in the previous case, we can construct the set of $2^k$ logical basis states of the form given by Equation~\ref{eq:StabilizerCodeword}.

\end{proof}

\begin{corollary}
\label{cor:StabilizerBasis}
Suppose $\cal{C}_{\cal{S}}$ is an $n$-qubit stabilizer containing~$k$ logical qubits. Given~$2^k$ states~$\ket{\varphi_m}\in \cal{C}_{\cal{S}}$ whose expansion in terms of the computational basis states are all non-overlapping, then these states must be of the form
\begin{align*}
\ket{\varphi_m} = \sum_l i^{a_{m,l}} \ket{m_l}.
\end{align*}
\end{corollary}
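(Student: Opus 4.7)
The plan is to use Lemma~\ref{lem:StabilizerCodewords} to produce a reference basis $\{\ket{\psi_l}\}_{l=0}^{2^k-1}$ of the codespace~$\cal{C}_{\cal{S}}$ that already has the desired form, and then show that the hypothesised states~$\{\ket{\varphi_m}\}$ must, up to global phases, be a permutation of this reference basis. The two properties I will extract from Lemma~\ref{lem:StabilizerCodewords} are: (i) the supports $T_l$ of the $\ket{\psi_l}$ in the computational basis are pairwise disjoint and partition the support of $\cal{C}_{\cal{S}}$, and (ii) within each $T_l$ all coefficients of~$\ket{\psi_l}$ share a common modulus and differ only by powers of~$i$.

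First I would observe that any $2^k$ states with pairwise disjoint computational basis supports are automatically linearly independent, so the $\{\ket{\varphi_m}\}$ form a basis of~$\cal{C}_{\cal{S}}$. Expanding each $\ket{\varphi_m}=\sum_l c_{m,l}\ket{\psi_l}$ then yields an invertible $2^k\times 2^k$ change-of-basis matrix $C=(c_{m,l})$. Next I would argue that the disjointness hypothesis forces $C$ to be a generalised permutation matrix: since distinct $\ket{\psi_l}$'s have disjoint supports, cross-$l$ cancellations are impossible, and since within a fixed block $T_l$ every coefficient of $\ket{\psi_l}$ has equal modulus, the contribution of $\ket{\psi_l}$ to $\ket{\varphi_m}$ either vanishes on all of $T_l$ (when $c_{m,l}=0$) or is nonzero on all of $T_l$. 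Consequently, if both $c_{m_1,l}$ and $c_{m_2,l}$ were nonzero for some $m_1\neq m_2$, then $\ket{\varphi_{m_1}}$ and $\ket{\varphi_{m_2}}$ would share the whole block $T_l$ in their supports, contradicting the hypothesis. Each column of $C$ therefore has at most one nonzero entry, and invertibility upgrades this to exactly one, so $C$ is a permutation matrix with nonzero scalar weights $c_{m,\pi(m)}$.

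To conclude I would note that $\ket{\varphi_m}=c_{m,\pi(m)}\ket{\psi_{\pi(m)}}$, and since both sides are unit vectors we have $|c_{m,\pi(m)}|=1$, so $c_{m,\pi(m)}=e^{i\alpha_m}$. Absorbing this phase into the physical state (which leaves it unchanged as a ray) gives $\ket{\varphi_m}=\ket{\psi_{\pi(m)}}$ in the sense relevant to the statement, so the coefficients of~$\ket{\varphi_m}$ in the computational basis inherit the form $i^{a_{m,l}}$ from those of $\ket{\psi_{\pi(m)}}$ supplied by Lemma~\ref{lem:StabilizerCodewords}. The main obstacle I anticipate is justifying the no-cancellation step rigorously, since in principle a complex linear combination of several $\ket{\psi_l}$ could conceivably shrink the support; however, the equal-magnitude and disjoint-support structure of the reference basis rules this out cleanly, making the rest of the argument essentially combinatorial once that basis is in hand.
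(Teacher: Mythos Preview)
Your proposal is correct and follows essentially the same approach as the paper: fix the reference basis $\{\ket{\psi_l}\}$ from Lemma~\ref{lem:StabilizerCodewords}, expand each $\ket{\varphi_m}$ in it, and use the disjointness of the supports $T_l$ together with a pigeonhole/dimension count to force the change-of-basis matrix to be a (phased) permutation. Your write-up is actually more careful than the paper's, which compresses the no-cancellation and counting steps into a single sentence; your explicit observation that disjointness of the $T_l$ precludes cross-block cancellation (so that the support of $\ket{\varphi_m}$ is exactly $\bigcup_{l:\,c_{m,l}\neq 0}T_l$) is the key point, and the equal-modulus remark, while true, is not strictly needed for it.
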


\begin{proof}
Since all $2^k$ states are elements of~$\cal{C}_{\cal{S}}$, they must be convex combinations of any basis chosen for~$\cal{C}_{\cal{S}}$. Choose the basis given by the states from Lemma~\ref{lem:StabilizerCodewords}. Then, if any of the~$\ket{\varphi_m}$ were a convex combination of states from such a basis, there must be at least one overlapping state relative to the individual states in its computational basis state expansion. Otherwise the dimension of the logical Hilbert space would be too small to fit all of these logical states.
\end{proof}

We know by Claim~\ref{lem:StabilizerCodewords}, that the computational basis state expansion of~$\ket{1_L}$ will be a sum of states such that each state differs from those in the representation of~$\ket{0_L}$. Moreover, the gate~$Z(\theta)^{\otimes n}$ will preserve all of these basis states, potentially introducing relative phases between the elements of the sum, however by Corollary~\ref{cor:StabilizerBasis} the resulting states must also form a basis for the stabilizer code, and in particular for the case of an automorphism the states must form the same logical basis. We now proceed with the proof of Proposition~\ref{prop:StronglyTransversalStabilizer}. 

\begin{proof}
We can represent a general Pauli string as a binary matrix using $\{I\rightarrow00, X\rightarrow10, Y\rightarrow11, Z\rightarrow01\}$. We will write an $n$-qubit Pauli string as a 2$n$-bit string $f=(g|h)$. Here we have separated the string into the two substrings of $n$-bits (an $X$~$(g)$ and $Z$~$(h)$ substring). We can express the expansion of~$|0_L\rangle$ and~$|1_L\rangle$ in terms of binary strings as

\begin{align}
\ket{0_L} = \prod_{i}(I + G_i)|0\rangle^{\otimes n} &=
|g_0\rangle + \ket{g_{Z_L}} + \sum_{i_1}(|g_{i_1}\rangle + \ket{g_{Z_L} \oplus g_{i_1}}) + \sum_{i_1<i_2}(|g_{i_1}\oplus g_{i_2}\rangle + \ket{g_{Z_L} \oplus g_{i_1} \oplus g_{i_2}})\\
& \qquad +...+ \sum_{i_1<i_2<...<i_{|G|}}(|g_{i_1}\oplus...\oplus g_{i_{|G|}}\rangle + \ket{g_{Z_L} \oplus g_{i_1}\oplus...\oplus g_{i_{|G|}}})
\end{align}

and 

\begin{align}
X_L \ket{0_L} &=
|g_{X_L} \rangle + \ket{g_{X_L} \oplus g_{Z_L}} + \sum_{i_1}(|g_{X_L} \oplus g_{i_1}\rangle + \ket{g_{X_L} \oplus g_{Z_L} \oplus g_{i_1}}) \\
& \qquad +...+ \sum_{i_1<i_2<...<i_{|G|}}(|g_{X_L} \oplus g_{i_1}\oplus...\oplus g_{i_{|G|}}\rangle + \ket{g_{X_L} \oplus g_{Z_L} \oplus g_{i_1}\oplus...\oplus g_{i_{|G|}}})
\end{align}

Here $g_0$ is the ``all-zeros'' string, $g_i(g_{X_L})$ is a binary string corresponding to the location of the $X$~Pauli operators in the set $G_i(X_L)$, and $\oplus$ is bitwise XOR. Rows of $G_{X_i}$ and $X_L$ can be expressed as binary strings with the association $\{I\rightarrow00, X\rightarrow10, Y\rightarrow11, Z\rightarrow01\}$, and all strings have length $2n$.

The effect of applying a $Z(\theta)^{\otimes n}$ rotation to a string $f=(g|h)$ will be
\begin{equation}
Z(\theta)^{\otimes n}|g\rangle = e^{i\pi\theta |g_X|}|g\rangle.
\end{equation}

For the CSS codes, we assumed that $X_L(Z_L)$ consisted of single qubit unitaries $X$ and $I$ ($Z$ and $I$). In this case we make no such assumptions. 

\begin{align*}
 \theta|g_{i_1}| &= 0 \bmod 2\\
 \theta|g_{i_1}\oplus g_{i_2}| &= 0 \bmod 2\\
 \vdots& \\
 \theta|g_{i_1}\oplus...\oplus g_{i_{|G|}}| &= 0 \bmod 2\\
 \\
 \theta|g_{{X_L}}| &\ne 0 \bmod 2\\
 \theta|g_{{X_L}}\oplus g_{i_1}| &\ne 0 \bmod 2\\
 \vdots& \\
\theta|g_{{X_L}}\oplus g_{i_1}\oplus...\oplus g_{i_{|G|}}| &\ne 0 \bmod 2\\
 \\
 \theta|g_{{Z_L}}| &= 0 \bmod 2\\
& \forall 0<i_1<i_2<...<i_{|G|}\le|G|
\end{align*}

The additional requirement is from $[Z_L, Z(\theta)^{\otimes n}] = 0$. Otherwise $Z(\theta)^{\otimes n}|0_L\rangle=|0_L\rangle \ne Z(\theta)^{\otimes n}Z_L|0_L\rangle$. 

These constraints are the same as before (actually slightly more constraining) and the proof carries through analogously. 

\end{proof}

\subsection{Relaxing strong transversality}

\begin{prop}
A nontrivial stabilizer code can only have transversal $Z$ rotations which are of the form $Z(a/2^k)$. 
\label{prop:GeneralStabilizer}
\end{prop}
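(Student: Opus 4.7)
The plan is to reduce the general transversal case, in which each physical qubit $i$ may undergo a distinct rotation $Z(\theta_i)$, to the strongly transversal (uniform-angle) case already settled by Proposition~\ref{prop:StronglyTransversalStabilizer}. First, I would invoke the appendix result asserting that any irrational angle among the $\theta_i$ must cancel across the qubits of each logical basis state and so contributes nothing to the action of the transversal gate. This allows us to restrict to rational angles and to write $\theta_i = c_i/q$ using a common denominator, with $c_i, q \in \mathbb{Z}$.

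The central step is the \emph{decompression lemma}. Given a nontrivial stabilizer code $\mathcal{S}$ on $n$ qubits admitting the transversal rotation $Z_T = \bigotimes_{i=1}^n Z(c_i/q)$, I would build an enlarged stabilizer code $\tilde{\mathcal{S}}$ on $N = \sum_i c_i$ qubits by replacing each original qubit $i$ by a block of $c_i$ copies. The generators of $\tilde{\mathcal{S}}$ arise from two prescriptions: (i) extend each original generator by the substitutions $X_i \mapsto X_{i,1} X_{i,2} \cdots X_{i,c_i}$, $Z_i \mapsto Z_{i,1}$, and $Y_i \mapsto Y_{i,1} X_{i,2} \cdots X_{i,c_i}$; (ii) adjoin the intra-block equality stabilizers $Z_{i,j} Z_{i,j+1}$ for every block of size at least two. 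A direct Pauli-commutation check shows the extended generators mutually commute, commute with the equality stabilizers, and remain independent; a dimension count gives $k$ logical qubits in $\tilde{\mathcal{S}}$, matching $\mathcal{S}$. The equality stabilizers lock each block into its repetition subspace $\{\ket{0 \cdots 0},\ket{1 \cdots 1}\}$, on which $Z(1/q)^{\otimes c_i}$ acts as $\mathrm{diag}(1, e^{i\pi c_i/q}) = Z(c_i/q)$. Consequently the strongly transversal rotation $Z(1/q)^{\otimes N}$ on $\tilde{\mathcal{S}}$ realizes precisely the same logical operation as $Z_T$ on $\mathcal{S}$.

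Once the decompression lemma is in hand, invoking Proposition~\ref{prop:StronglyTransversalStabilizer} on $\tilde{\mathcal{S}}$ forces $1/q$ to be of the form $a/2^k$, so $q$ must be a power of two, and hence each $\theta_i = c_i/q$ has the claimed form $a/2^k$.

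The main obstacle is verifying that $\tilde{\mathcal{S}}$ is itself a nontrivial stabilizer code, which is the hypothesis needed for Proposition~\ref{prop:StronglyTransversalStabilizer}. An $X_{(i,j)}$ error in a block of size at least two is detected by a neighbouring equality stabilizer $Z_{i,j} Z_{i,j\pm 1}$; for singleton blocks, and for all $Z$-type errors, detection is inherited from nontriviality of $\mathcal{S}$, since a generator of $\mathcal{S}$ whose local Pauli at qubit $i$ anticommutes with the given error lifts to an extended generator that has a non-identity Pauli on every copy of block $i$, so anticommutation persists at the affected copy. With this verification completed, the proof of Proposition~\ref{prop:StronglyTransversalStabilizer} applied to $\tilde{\mathcal{S}}$ delivers the desired conclusion for $\mathcal{S}$.
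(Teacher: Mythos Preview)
Your proposal is correct and follows exactly the paper's strategy: invoke the appendix to eliminate irrational angles, then apply a decompression lemma to reduce the general transversal case to the strongly transversal one handled by Proposition~\ref{prop:StronglyTransversalStabilizer}. Your explicit construction of $\tilde{\mathcal{S}}$ (with block equality stabilizers $Z_{i,j}Z_{i,j+1}$ and the substitution rules for extending generators) is a more careful instantiation of the paper's terse ``repeat the column of the check matrix'' sketch; just be sure to normalize so that every $c_i\ge 1$ (e.g., shift each $\theta_i$ into $(0,2]$, or replace $c_i=0$ by $c_i=2q$) before building $\tilde{\mathcal{S}}$, since your nontriviality verification tacitly assumes each block is nonempty.
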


First, notice that if the transversal operator includes the identity anywhere, it will have no effect on that qubit and therefore, we can formulate the overlap conditions on a new code with that qubit removed. Unlike puncturing a code we are not actually removing the qubit from the code; it is simply not included in the overlap conditions. In what follows, we will assume this process has been implemented, and no identity operators remain. We can do this without any difficulties since our overlap conditions make no use of the commuting properties of stabilizer generators.

To prove this more general case we will introduce a new tool; the {\it decompression lemma}. 

\begin{lemma}
If an $[[n,k,d]]$ code exists with a transversal $Z_{T}(\theta)=Z(\theta_1)\otimes Z(\theta_2)\otimes...\otimes Z(m\theta_n)$ gate, then there exists an $[[n+m-1,k,2]]$ code with a transversal $Z_{T}'(\theta)=Z(\theta_1)\otimes Z(\theta_2)\otimes...\otimes (Z(\theta_n)^{\otimes m})$ gate. 
\end{lemma}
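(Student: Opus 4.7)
\noindent\emph{Proof plan.} The approach is to realize the decompression by replacing qubit~$n$ of the original code with a block of $m$ physical qubits encoded in the $m$-qubit repetition code, whose codespace is spanned by $\ket{0}^{\otimes m}$ and $\ket{1}^{\otimes m}$. The key observation is that $Z(\theta_n)^{\otimes m}$ acts trivially on $\ket{0}^{\otimes m}$ and multiplies $\ket{1}^{\otimes m}$ by $e^{im\pi\theta_n}$, which is exactly the action of the single-qubit rotation $Z(m\theta_n)$ on the logical qubit of the repetition code. Therefore, swapping the $n$-th tensor factor $Z(m\theta_n)$ of $Z_T(\theta)$ for $Z(\theta_n)^{\otimes m}$ preserves the induced logical operation on the decompressed codespace.

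Concretely, I would construct the new $(n+m-1)$-qubit code as follows. For each original stabilizer generator and each original logical Pauli, substitute the Pauli acting on qubit~$n$ by its repetition-code logical counterpart on the new block, using the conventions $I\mapsto I^{\otimes m}$, $X\mapsto X^{\otimes m}$, $Z\mapsto Z_n$ (Pauli~$Z$ on one designated qubit of the new block), and $Y\mapsto Y_n X_{n+1}\cdots X_{n+m-1}$. Then adjoin the $m-1$ repetition stabilizers $Z_iZ_{i+1}$ for $i=n,\ldots,n+m-2$. Because each substituted Pauli on the block acts on the repetition logical qubit in the same way as the original Pauli acted on qubit~$n$, all (anti)commutation relations among stabilizers and logical operators are preserved; moreover every substituted generator commutes with each $Z_iZ_{i+1}$. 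A short independence check (the only substituted-generator products supported purely on the block are $I^{\otimes m}$, $X^{\otimes m}$, $Z_n$, or $Y_n X_{n+1}\cdots X_{n+m-1}$, none of which equal a nontrivial product of $Z_iZ_{i+1}$'s) shows that the $(n-k)+(m-1)$ generators are independent, so the new code encodes~$k$ logical qubits.

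For the distance claim, any weight-one $X$ or $Y$ on a new qubit anticommutes with an adjacent $Z_iZ_{i+1}$; a weight-one $Z$ on any new qubit is equivalent modulo the repetition stabilizers to $Z$ on qubit~$n$ of the original code, which by $d\ge 2$ of the original is either in the stabilizer or anticommutes with an original generator whose substitution still anticommutes with it. Weight-one errors on the first $n-1$ qubits are detected exactly as before, so $d\ge 2$. Finally, transversality of $Z_T'(\theta)$ follows because the $Z_iZ_{i+1}$ stabilizers force every codeword, in its computational-basis expansion, to have its last $m$ bits either all zero or all one; on such basis elements $Z(\theta_n)^{\otimes m}$ contributes the same phase $e^{im\pi\theta_n}$ (respectively~$1$) as $Z(m\theta_n)$ would in the original encoding, and all factors $Z(\theta_j)$ for $j<n$ are unchanged, so the logical action matches that of $Z_T(\theta)$ on the original code. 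The main care I expect is tracking the phase convention in the $Y$-substitution (since $Y=iXZ$) and carrying out the independence argument cleanly; the distance and transversal-gate verifications are each essentially a one-line check once the construction is in place.
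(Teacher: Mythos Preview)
Your construction is correct and is essentially the same as the paper's: both replace the $n$-th qubit by an $m$-qubit repetition block, which the paper phrases as ``repeating the last column of the check matrix $m$ times'' and then adjoining $m-1$ new $Z$-type stabilizers. The paper's verification is terser because it appeals to the overlap-condition machinery developed earlier (the $X$-parts of the generators and logical operators are unchanged under column repetition, so the conditions for $Z_T'$ coincide with those for $Z_T$), whereas you give a self-contained argument directly from the repetition codespace; your explicit Pauli substitutions, independence check, and distance verification are more careful than what the paper writes down, but the underlying idea is identical.
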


This lemma is quite useful, yet nearly trivial. As we have shown, the overlap conditions on $X$ stabilizer generators and logical operators completely determine whether a specific $Z$ rotation can be implemented transversally. Then, if a code admits a transversal operation $Z_{T}(\theta) = Z(\theta_1)\otimes Z(\theta_2)\otimes...\otimes Z(m\theta_n)$, this code's $X$ stabilizer generators and logical operators clearly satisfy the overlap conditions for the transversal $Z$ operator. Now, if we take the last column of the check matrix and repeat it $m$ times, we have a new code which has distance two since a repeated column in the check matrix creates a weight two logical operator. It is easy to see that $Z_{T}'(\theta) = Z(\theta_1)\otimes Z(\theta_2)\otimes...\otimes (Z(\theta_n)^{\otimes m})$ satisfies the same overlap conditions on the new code that $Z_T$ satisfied for the original code, and it follows that $Z_{T}'(\theta)$ implements the same logical operation as $Z_{T}(\theta)$. Here we have not specified the $Z$ stabilizer generators and it should be noted that in the new code obtained after applying the decompression lemma, there will be $m-1-n$ new $Z$ stabilizer generators.

In this case we have a transversal gate 
\begin{equation}
Z_T(\theta) := Z(\theta_1)\otimes Z(\theta_2)\otimes...\otimes Z(\theta_n).
\end{equation}

In the appendix (see Section~\ref{app:rational}), we prove that irrational angles must cancel and therefore add nothing. We can therefore assume that $Z(\theta_i)$ is rational. 

Therefore, we have a transversal gate of the form
\begin{equation}
Z_T(\theta) := Z(p_1/q_1)\otimes Z(p_2/q_2)\otimes...\otimes Z(p_n/q_n).
\end{equation}

We can find the least common denominator $q$ of $q_1,...,q_n$ and express this as 
\begin{equation}
Z_T(\theta) := Z(p'_1/q)\otimes Z(p'_2/q)\otimes...\otimes Z(p'_n/q).
\end{equation}

We can also use $Z(2+p/q) = Z(p/q)$ to put each $p'_i/q_i$ in $[0,2)$. We also assume that $Z(p/q)\ne I$ since we could ignore this operator and the qubit it acts upon, as they do not affect the overlap conditions.  

Now, we repeatedly apply the decompression lemma until we have an $[[\sum_i p_i,k,2]]$ code with a transversal gate
\begin{equation}
Z_T(\theta) := Z(1/q)\otimes...\otimes Z(1/q).
\end{equation}

We have now reduced these more general gates to strongly transversal gates, and the proof follows as before.

\subsection{Classification of all single qubit logical gates}

Recall that Zeng~\textit{et al.} showed that all single-qubit logical transversal gates for a stabilizer code must have the form~\cite{Zeng:2007a}:

\begin{equation}
U = L\left(\bigotimes_{j=1}^{n}\mbox{diag}(1,e^{i\pi\theta_j})\right)R^{\dagger}P_{\pi},
\end{equation}
where $P_{\pi}$ is a permutation matrix of the physical qubits while $R$ and~$L$ are transversal single-qubit Clifford operators. Let $D = \bigotimes_{j=1}^{n}\mbox{diag}(1,e^{i\pi\theta_j})$ and note that, in the case where $LR^{\dagger}P_{\pi}$ is an automorphism, \ie~they preserve the stabilizer codespace, then $D$ must also be an automorphism and the gate restrictions from Proposition~\ref{prop:StronglyTransversalStabilizer} must hold.

Since $P_{\pi}$ permutes the physical qubits of the original stabilizer code, and~$R^{\dagger}$ is a transversal Clifford operation after the application of these two gates, a state that was originally in codespace of a stabilizer code~$\cal{S}$ must also be a a codespace of a stabilizer code~$\cal{S}'$ (which could potentially be the same stabilizer code).

\begin{prop}
Given two nontrivial $n$-qubit stabilizer codes~$\cal{S}$ and~$\cal{T}$ consisting of~$k$ logical qubits, strongly transversal $Z$ rotations which map $\cal{S} \longrightarrow \cal{T}$ (and possibly apply a logical unitary in the process) must be of the form $Z(a/2^k)$.  
\label{prop:StronglyTransversalStabilizerConv}
\end{prop}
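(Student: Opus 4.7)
The plan is to adapt the proof of Proposition~\ref{prop:StronglyTransversalStabilizer} to the setting where $U=Z(\theta)^{\otimes n}$ maps the codespace of $\cal{S}$ to that of $\cal{T}$, by using Lemma~\ref{lem:StabilizerCodewords} and Corollary~\ref{cor:StabilizerBasis} to convert the code-to-code mapping into algebraic constraints on $\theta$. The essential modification from the automorphism case is that the admissible phases introduced by $U$ are fourth roots of unity $i^{a}$ rather than $\pm 1$, since Corollary~\ref{cor:StabilizerBasis} only pins down coefficients of a stabilizer basis to powers of $i$.

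First I would choose a basis $\{\ket{\psi_m^{\cal{S}}}\}_{m=0}^{2^k-1}$ of $\cal{C}_{\cal{S}}$ of the form given by Lemma~\ref{lem:StabilizerCodewords}, $\ket{\psi_m^{\cal{S}}}=\sum_l i^{a_{m,l}}\ket{m_l}$, with pairwise non-overlapping computational-basis supports. Because $U$ is diagonal in the computational basis, its action multiplies each term by $e^{i\pi\theta|m_l|}$ without changing the supports, so the images $\{U\ket{\psi_m^{\cal{S}}}\}$ remain pairwise non-overlapping while lying in $\cal{C}_{\cal{T}}$. Corollary~\ref{cor:StabilizerBasis} then forces each image, up to a global phase $\gamma_m$, to have coefficients that are powers of $i$; taking the ratio of coefficients for two computational basis states in the same support eliminates $\gamma_m$ and yields
\begin{equation*}
e^{i\pi\theta(|m_l|-|m_{l'}|)}\in\{1,i,-1,-i\},
\end{equation*}
i.e., $2\theta(|m_l|-|m_{l'}|)\in\mathbb{Z}$ for every such pair.

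Next I would translate these into overlap constraints on the $X$-parts of the stabilizer generators and logical Pauli operators of $\cal{S}$, as in the proof of Proposition~\ref{prop:StronglyTransversalStabilizer}. For the zero-logical state, taking $m_{l'}$ to correspond to the all-zeros string (after the usual reduction that some computational basis state lies in the support) gives $2\theta\,|g_{i_1}\oplus\cdots\oplus g_{i_r}|\in\mathbb{Z}$ for every subset of stabilizer generators, and the logical-shifted states contribute analogous conditions involving the strings $g_{X_{L,j}}$. Expanding via the Bravyi--Haah identity~(\ref{eq:BravyiHaah}) recasts these as higher-order intersection conditions on $|g_{i_1}\wedge\cdots\wedge g_{i_r}|$ and $|g_{X_L}\wedge g_{i_1}\wedge\cdots\wedge g_{i_r}|$, formally identical to the overlap conditions of Section~\ref{sssec:Existence} after the substitution $\theta\mapsto 2\theta$. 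The binary-matrix counting argument there then applies verbatim: writing $2\theta=p/q$ in lowest terms, the assumption that $q$ has an odd prime factor together with a nontrivial logical shift yields the same contradiction as before, forcing $q$ to be a power of two and hence $\theta=a/2^{k+1}$, which is of the required form.

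The main obstacle I anticipate is the nontriviality ingredient. In the original argument the inequality $|g_L|\ne 0\bmod q_o$ drove the contradiction, and the $\cal{S}\to\cal{T}$ setting needs a substitute: one must argue that if every weight difference $|m_l|-|m_{l'}|$ appearing across the $2^k$ basis supports vanishes modulo the odd prime factor $q_o$, then $U$ acts as an overall scalar on $\cal{C}_{\cal{S}}$, so that $\cal{T}=\cal{S}$ and the image states coincide with the input basis, reducing the statement either to Proposition~\ref{prop:StronglyTransversalStabilizer} or to the trivial case where $\theta$ is an integer. Making this reduction clean requires careful bookkeeping of the logical basis structure produced by Lemma~\ref{lem:StabilizerCodewords}, in particular of how the $g_{X_{L,j}}$-shifts of the stabilizer-orbit strings interact with the intersection conditions.
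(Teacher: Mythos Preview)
Your opening matches the paper's: both use Lemma~\ref{lem:StabilizerCodewords} and Corollary~\ref{cor:StabilizerBasis} to conclude that the relative phases $D=Z(\theta)^{\otimes n}$ introduces on the computational-basis terms of each $\ket{\psi_m}$ must be powers of~$i$. From that point, however, the paper takes a much shorter route that bypasses the obstacle you flag. Rather than re-deriving overlap constraints with a rescaled angle, the paper simply observes that since every relative phase is a fourth root of unity, applying $D$ four times sends each $\ket{\psi_m}$ back to itself up to a global phase $e^{i4\pi\phi_m}$. Thus $D^4=Z(4\theta)^{\otimes n}$ is a strongly transversal diagonal gate preserving~$\cal{C}_{\cal{S}}$, and one invokes Proposition~\ref{prop:StronglyTransversalStabilizer} directly on $4\theta$ to conclude that $\theta$ has the required dyadic form.

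The payoff of this reduction is that the entire nontriviality analysis is exported to the already-proved automorphism case: you never have to manufacture a $|g_L|\not\equiv 0\bmod q_o$ condition in the two-code setting, because whatever diagonal logical gate $D^4$ implements on~$\cal{S}$ supplies it automatically. Your direct route of re-running the binary-matrix argument after the substitution would also go through in principle, and the obstacle you anticipate is real (the within-support constraints by themselves yield only $=0\bmod q_o$ conditions, with no built-in inequality), but the $D^4$ observation makes that bookkeeping unnecessary.
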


\begin{proof}
Let $\{ \ket{\psi_m}_{m=1}^{2^k} \}$ form a logical basis set for the stabilizer code and choose~$\cal{S}$ to be of the form outlined in Lemma~\ref{lem:StabilizerCodewords}. Then, given a transversal application of diagonal gates, the resulting set of states must also form a basis for a stabilizer code (in this case chosen to be~$\cal{T}$) such that each individual basis state will have the same expansion in terms of the computational basis states. However, the transversal application of diagonal gates may result in relative phases between the states; since the states must form a basis for the stabilizer code of the type given by Lemma~\ref{lem:StabilizerCodewords}, the relative phases must be powers of~$i$. Therefore, the transformed states read:
\begin{align*}
\ket{\psi_m} = \sum_l i^{a_{m,l}} \ket{m_l} \xrightarrow{D} \ket{\varphi_m} = e^{i \pi \phi_m}\sum_l i^{a_{m,l}} i^{c_{m,l}} \ket{m_l} .
\end{align*}
Therefore, repeating the action of the diagonal transversal gate 4 times must return the original set of basis states (with the possible introduction of a phase).
\begin{align*}
\ket{\psi_m} = \sum_l i^{a_{m,l}} \ket{m_l} \xrightarrow{D^4} e^{i 4 \pi \phi_m} \ket{\psi_m} = e^{i 4 \pi \phi_m}\sum_l i^{a_{m,l}} \ket{m_l} .
\end{align*}
We are now back to the original case of classifying transversal diagonal gates for logical gates returning to the same codespace, which we have already classified to be rotations of the form~$Z(a/2^k)$. Therefore, we are similarly restricted for the case of logical mappings between stabilizer codes.

\end{proof}

\section{Multi-block gates}

Consider now the case of $r$~blocks of the same error correcting code~$Q=[[n,k,d]]$. Zeng~\etal classified the set of gates that can be transversal across these codeblocks. Namely, if $U$ is a transversal gate on $Q^{\otimes r}$, then for each $j \in [n]$ either $U_j \in \mathcal{L}_r$ or $U_j = L_1 V L_2$, where $L_1, L_2 \in \mathcal{L}_1^{\otimes r}$ are local Clifford gates, and $V$ keeps the linear span of the group elements of~$\langle \pm Z_j^{(i)}, i\in [r] \rangle$.

This work focuses on the gates~$V$, which must be diagonal in order to preserve the span of the group of $Z$ operators across qubits at a fixed~$i$.

\subsection{Strong transversality for two-qubit logical gates}

Consider first the implementation of a logical diagonal gate in the case of two codeblocks, where the logical gate is implemented by using a strongly transversal gate. That is, consider the implementation of the diagonal two-qubit logical gate by applying a given two-qubit gate,
$U = \sum_j e^{i \pi \theta_j }$ transversally~$U^{\otimes n}$ among the corresponding pair of qubits between the codeblocks. The desired logical gate to be implemented has the form
\begin{align*}
U_L = \sum_j e^{i \pi \omega_j },
\end{align*}
where the states~$\{ \ket{j}_L \}_j = \{ \ket{00}_L, \ket{01}_L, \ket{10}_L, \ket{11}_L \} $ are two-qubit logical states spanning the two codeblocks.

As in the single block case, the desired action of the logical gate on the logical states will impose a restriction on the form of the two-qubit physical gates that can be implemented in a strongly transversal manner. In the case of a quantum CSS code, the above logical gate description will have the following form:
\begin{align*}
U_L \ket{00}_L &= U^{\otimes n} \prod_i (I+G_{X_i}) \ket{0}^{\otimes n} \prod_j (I+G_{X_j}) \ket{0}^{\otimes n} \\
&= U^{\otimes n} \Big( \ket{g_0} + \sum_{i_1} \ket{g_{i_1}} + \sum_{i_1<i_2} \ket{g_{i_1} \oplus g_{i_2}} + \hdots + \sum_{i_1 < i_2 < \hdots < i_{|G_X|}} \ket{i_1 =\oplus \hdots \oplus i_{|G_X|}} \Big) \\
& \qquad \otimes \Big( \ket{g_0} + \sum_{j_1} \ket{g_{j_1}} + \sum_{j_1<j_2} \ket{g_{j_1} \oplus g_{j_2}} + \hdots + \sum_{j_1 < j_2 < \hdots < j_{|G_X|}} \ket{j_1 =\oplus \hdots \oplus j_{|G_X|}} \Big)\\
&=e^{i \pi \omega_{00}} \ket{00}_L.
\end{align*}

Note that each of the $4^{|G_X|}$~states in the summation of the $\ket{00}_L$ state are computational basis states and will not change under the action of~$U^{\otimes n}$ except for the possible addition of a phase. Therefore, in order to remain a codeword, all states in the expansion must have the same phase.

Without loss of generality, one can assume that the phase~$\theta_{00}=0$ (this maps to a global phase freedom in the logical gate~$U_L$). Consider now the phases introduced on all $2^{|G_X|}$ states in the expansion of the first qubit, along with the state $\ket{g_0}$ in the expansion of the second qubit. For clarity, we will list the state in the expansion, along with the corresponding condition imposed on its phase.

\begin{align*}
&\ket{g_0}\ket{g_0}:	& n\theta_{00} = \omega_0 &= 0 \qquad \text{mod } 2\\
&\ket{g_{i_1}}\ket{g_0}:	& |g_{i_1}| \theta_{10} + (n-|g_{i_1}|)\theta_{00} = |g_{i_1}| \theta_{10} &= 0 \qquad \text{mod } 2 \\
&\ket{g_{i_1}\oplus g_{i_2}} \ket{g_0}:	& |g_{i_1} \oplus g_{i_2}| \theta_{10} &= 0 \qquad \text{mod } 2 \\
& \vdots \\
&\ket{g_{i_1}\oplus \hdots \oplus g_{i_{|G_X|}}} \ket{g_0}:	& |g_{i_1} \oplus \hdots \oplus g_{i_{|G_X|}} | \theta_{10} &= 0 \qquad \text{mod } 2
\end{align*}

These conditions are equivalent to the conditions derived on a single codeblock and therefore $\theta_{10}$ is restricted to be an integer multiple of~$1/2^c$ (when paired with the appropriate restrictions on the logical phase~$\omega_{10}$ as given below). In a very similar manner, a set of constraints can be obtained for the angle~$\theta_{01}$ due to the symmetry of the two codes:
\begin{align*}
&\ket{g_0}\ket{g_{j_1}}:	& |g_{j_1}| \theta_{01} + (n-|g_{j_1}|)\theta_{00} = |g_{i_1}| \theta_{01} &= 0 \qquad \text{mod } 2 \\
&\ket{g_0}\ket{g_{j_1}\oplus g_{j_2}}:	& |g_{j_1} \oplus g_{j_2}| \theta_{01} &= 0 \qquad \text{mod } 2 \\
& \vdots \\
&\ket{g_0}\ket{g_{j_1}\oplus \hdots \oplus g_{j_{|G_X|}}}:	& |g_{j_1} \oplus \hdots \oplus g_{j_{|G_X|}} | \theta_{01} &= 0 \qquad \text{mod } 2 
\end{align*}
Therefore, the phase angle~$\theta_{01}$ will also be restricted to be an integer multiple of~$1/2^c$.

In order to obtain a restriction on the phase angle~$\theta_{11}$, higher order state vectors must be considered in both expansions of the logical~$\ket{0}_L$ states. Consider the full summation over states in the expansion of the first block, along with state vectors in the second block of the form~$\ket{g_{j_1}}$.

\begin{align*}
&\ket{g_{i_1}}\ket{g_{j_1}}:	& |g_{i_1}\wedge g_{j_1}| (\theta_{11} - \theta_{01} -\theta_{10})  + |g_{i_1}| \theta_{10} + |g_{j_1}| \theta_{01} &= 0 \qquad \text{mod } 2 \\
&\ket{g_{i_1}\oplus g_{i_2}}\ket{g_{j_1}}:	& |(g_{i_1} \oplus g_{i_2}) \wedge g_{j_1}| (\theta_{11} - \theta_{01} -\theta_{10})  + |g_{i_1} \oplus g_{i_2}| \theta_{10} + |g_{j_1}| \theta_{01} &= 0 \qquad \text{mod } 2 \\
& \vdots \\
&\ket{g_{i_1}\oplus \hdots \oplus g_{i_{|G_X|}}}\ket{g_{j_1}}:	& |(g_{i_1} \oplus \hdots \oplus g_{i_{|G_X|}}) \wedge g_{j_1}| (\theta_{11} - \theta_{01} -\theta_{10}) \qquad  \\
&&+ |g_{i_1} \oplus \hdots \oplus g_{i_{|G_X|}}| \theta_{10} + |g_{j_1}| \theta_{01} &= 0 \qquad \text{mod } 2 \\
\end{align*}

First notice that every term other than the first term in each of the conditions will be equal to zero (mod~2), as a result of the set of conditions imposed on the phase angles~$\theta_{01}$ and~$\theta_{10}$. Therefore, what remains are conditions on the phase difference~$\theta_{11}' = (\theta_{11} - \theta_{01} -\theta_{10})$, which is equivalent to a condition on~$\theta_{11}$. Moreover, consider the expansion of the direct sum as given by Equation~\ref{eq:BravyiHaah}:

\begin{align*}
|(g_{i_1} \oplus g_{i_2}) \wedge g_{j_1}| \theta_{11}' &= |(g_{i_1} \wedge g_{j_1}) \oplus (g_{i_2} \wedge g_{j_1})| \theta_{11}' \\
&= \Big(  |g_{i_1} \wedge g_{j_1}| + | g_{i_2} \wedge g_{j_1}| - 2  |g_{i_1} \wedge g_{i_2} \wedge g_{j_1}| \Big) \theta_{11}' &= 0 \\
\Rightarrow  2  |g_{i_1} \wedge g_{i_2} \wedge g_{j_1}|  \theta_{11}'  = 0,
\end{align*}
where the implication in the final line is due to~$|(g_{i} \wedge g_{j_1})|\theta_{11}'=0$, for all~$i$ from the first set of constraints of the state vector~$\ket{g_{i_1}}\ket{g_{j_1}}$. Similarly,

\begin{align*}
|(g_{i_1} \oplus g_{i_2} \oplus g_{i_3}) \wedge g_{j_1}| \theta_{11}' &= |(g_{i_1} \wedge g_{j_1}) \oplus (g_{i_2} \wedge g_{j_1}) \oplus (g_{i_3} \wedge g_{j_1}) | \theta_{11}' \\
&= \Big(  |g_{i_1} \wedge g_{j_1}| + | g_{i_2} \wedge g_{j_1}| + |g_{i_3} \wedge g_{j_1}|  \\
& \qquad - 2  |g_{i_1} \wedge g_{i_2} \wedge g_{j_1}|  - 2  |g_{i_1} \wedge g_{i_3} \wedge g_{j_1}|  - 2  |g_{i_2} \wedge g_{i_3} \wedge g_{j_1}| \\
&\qquad \qquad +4 |g_{i_1} \wedge g_{i_2} \wedge g_{i_3} \wedge g_{j_1}|  \Big) \theta_{11}' = 0 \\
\Rightarrow  4  |g_{i_1} \wedge g_{i_2} \wedge g_{i_3} \wedge g_{j_1}|  \theta_{11}'  = 0.
\end{align*}
The final implication is a consequence of the above condition on~$2  |g_{i_1} \wedge g_{i_2} \wedge g_{j_1}|$. The same procedure will follow for all states in the expansion, and conditions on~$\theta_{11}'$ can thus be modified as:
\begin{align*}
&\ket{g_{i_1}}\ket{g_{j_1}}:	& |g_{i_1}\wedge g_{j_1}| \theta_{11}' &= 0 \qquad \text{mod } 2 \\
&\ket{g_{i_1}\oplus g_{i_2}}\ket{g_{j_1}}:	& 2 |g_{i_1} \wedge g_{i_2} \wedge g_{j_1}| \theta_{11}' &= 0 \qquad \text{mod } 2 \\
&\ket{g_{i_1}\oplus g_{i_2} \oplus g_{i_3}}\ket{g_{j_1}}:	& 4 |g_{i_1} \wedge g_{i_2} \wedge g_{i_3} \wedge g_{j_1}| \theta_{11}' &= 0 \qquad \text{mod } 2 \\
& \vdots \\
&\ket{g_{i_1}\oplus \hdots \oplus g_{i_{|G_X|}}}\ket{g_{j_1}}:	& 2^{|G_X| - 1} |g_{i_1} \wedge \hdots \wedge g_{i_{|G_X|}} \wedge g_{j_1}| \theta_{11}'  &= 0 \qquad \text{mod } 2.
\end{align*}

Given that the two codebocks are encoded in the same quantum error correcting code, these conditions are a modified version of the conditions derived in the single block case, where an extra factor of~2 is present in all of the constraints. This extra factor of 2 will have a consequence on the type of logical gates that can be implemented transversally and will limit the 2-qubit gates to reside in the same level of the Clifford hierarchy as the 1-qubit gates that can be implemented for a given code.

Finally, consider the action of the strongly transversal gate on the logical states~$\ket{01}_L$, $\ket{10}_L$, and~$\ket{11}$ when performing a logical~$X_L$ on the appropriate qubit(s). The resulting set of conditions impose a restriction on the logical phases~$\omega_{01}$,~$\omega_{10}$, and~$\omega_{11}$. For the logical state~$\ket{01}_L$ the conditions are:
\begin{align*}
&\ket{g_0}\ket{g_{L}}:	&  |g_L| \theta_{01} &= \omega_{01} \qquad \text{mod } 2 \\
&\ket{g_0}\ket{g_{j_1}\oplus g_L}:	& |g_{j_1} \oplus g_L| \theta_{01} &= \omega_{01} \qquad \text{mod } 2 \\
& \vdots \\
&\ket{g_0}\ket{g_{j_1}\oplus \hdots \oplus g_{j_{|G_X|}} \oplus g_L}:	& |g_{j_1} \oplus \hdots \oplus g_{j_{|G_X|}} \oplus g_L | \theta_{01} &= \omega_{01} \qquad \text{mod } 2.
\end{align*}
Therefore these restrictions, along with the conditions for the phase~$\theta_{01}$, will impose the restriction of the form of phases that can be applied, as shown in the single block case. In the exact same manner, restriction on the phases~$\theta_{10}$ and~$\omega_{10}$ are obtained. Finally, in order to obtain restrictions on the phase~$\omega_{11}$, consider the following:
\begin{align*}
&\ket{g_L}\ket{g_{L}}:	&  |g_L| \theta_{11} &= \omega_{11} \qquad \text{mod } 2 \\
&\ket{g_L}\ket{g_{j_1}\oplus g_L}:	& |g_{j_1} \oplus g_L| \theta_{11} &= \omega_{11} \qquad \text{mod } 2 \\
& \vdots \\
&\ket{g_L}\ket{g_{j_1}\oplus \hdots \oplus g_{j_{|G_X|}} \oplus g_L}:	& |g_{j_1} \oplus \hdots \oplus g_{j_{|G_X|}} \oplus g_L | \theta_{11} &= \omega_{11} \qquad \text{mod } 2,
\end{align*}
there conditions are in fact exactly the same as the overlap between the $g_L$~string in both states has to be the same since the two codes are encoded into the same codeblock. Therefore, the exact same overlap conditions are obtained for the phases~$\theta_{11}$ and~$\omega_{11}$. 

These set of conditions result in the following Theorem for two-qubit transversal diagonal gates.

\begin{theorem}\label{thm:multiq}
Given a quantum error correcting code that can implement the logical gate~$Z_L(1/2^k)$ by applying a transversal~$Z(1/2^k)^{\otimes n}$ on the underlying physical qubits yet cannot implement the gate~$Z_L(1/2^{k+1})$ due to code constraints. Then, the set of two-qubit diagonal gates $U=\sum_j e^{i\pi \theta_j}$ that can implement a logical two-qubit operation by applying such gates transversally~$U^{\otimes n}$ will be restricted to the angles (up to a global phase freedom),
\begin{align*}
\theta_{00} &= 0, \\
\theta_{01} &= a/2^k, \\
\theta_{10} &= b/2^k, \\
\theta_{11} &= a/2^k + b/2^k + c/2^{k-1},
\end{align*}
where $a,\ b$ and~$c$ are arbitrary integers. The resulting two-qubit logical gate will have the form (up to Clifford arbitrary Clifford gates),
\begin{align*}
U_L = \begin{pmatrix}
1 & 0 & 0 & 0 \\
0 & e^{i \pi \alpha/2^{k-1}} & 0 & 0 \\
0 & 0 & e^{i \pi \beta/2^{k}} & 0 \\
0 & 0 & 0 & e^{i \pi \gamma/2^{k}}
\end{pmatrix},
\end{align*}
where $\beta = 0 \Leftrightarrow \gamma = 0$. Moreover, the two-qubit logical gate is at the same level in the Clifford hierarchy as the single-qubit logical gate that transversal single qubit diagonal gates can implement, \ie~$Z_L(1/2^k) \in \cal{C}_1^{(k+1)}$ and $U_L \in \cal{C}_2^{(k+1)}$.
\end{theorem}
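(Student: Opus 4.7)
The plan is to consolidate the overlap-constraint calculations developed immediately above the theorem into its three component claims: the form of the individual physical angles, the resulting diagonal structure of the induced logical gate~$U_L$, and its membership in the two-qubit Clifford hierarchy.

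First, I would invoke the preceding analysis to extract the angle restrictions. The constraints on $\theta_{01}$ and $\theta_{10}$ arise from exactly the same family of overlap conditions as in the single-codeblock proof of Proposition~\ref{prop:StronglyTransversalStabilizer}, applied on one block while the other sits in the $\ket{g_0}$ component; the hypothesis on the code therefore forces $\theta_{01} = a/2^k$ and $\theta_{10} = b/2^k$. For the genuinely two-qubit phase, $\theta_{11}' := \theta_{11} - \theta_{01} - \theta_{10}$ satisfies the analogous overlap conditions but with an extra factor of~$2$ in every row, which relaxes the allowed denominator by one power of~$2$ and yields $\theta_{11}' = c/2^{k-1}$. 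Summing the three contributions recovers the stated form of $\theta_{11}$.

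Next, to pin down the logical matrix, I would compute the phases picked up by the representative components $\ket{g_L}\ket{g_0}$, $\ket{g_0}\ket{g_L}$, and $\ket{g_L}\ket{g_L}$ (and their stabilizer orbits) inside $\ket{10}_L$, $\ket{01}_L$, and $\ket{11}_L$, respectively. Each resulting $\omega_{ij}$ is an integer linear combination of $|g_L|\theta_{01}$, $|g_L|\theta_{10}$, and $|g_L|\theta_{11}'$ modulo~$2$, and substituting the parametrization from the first step produces the claimed diagonal entries of $U_L$. I expect the biconditional $\beta = 0 \Leftrightarrow \gamma = 0$ to be the main obstacle: by construction $\gamma$ decomposes additively into pieces that include $\beta$ together with the $\theta_{11}'$ contribution, and ruling out a conspiracy in which only one of $\beta, \gamma$ vanishes requires revisiting the factor-of-$2$ shift in the $\theta_{11}'$ overlap conditions together with the identity $|g_L \wedge g_L| = |g_L|$, which ties the two blocks symmetrically.

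Finally, for the Clifford-hierarchy claim I would mirror the single-qubit argument of Proposition~1, now conjugating $U_L$ by $X \otimes I$ and by $I \otimes X$ (conjugation by $Z \otimes I$ and $I \otimes Z$ being trivial since $U_L$ is diagonal). Each such conjugation produces an operator whose nontrivial phase entries carry denominators reduced by one power of~$2$, and iterating the procedure $k+1$ times terminates in $\cal{C}_2^{(1)} = \cal{P}_2$, establishing $U_L \in \cal{C}_2^{(k+1)}$. That this level matches the single-qubit level $Z_L(1/2^k) \in \cal{C}_1^{(k+1)}$ is then automatic from the parametrization, closing the argument.
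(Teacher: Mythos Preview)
Your outline for the first two parts---recovering the angle restrictions from the single-block overlap conditions and reading off the diagonal entries of $U_L$ from the phases on $\ket{g_L}\ket{g_0}$, $\ket{g_0}\ket{g_L}$, $\ket{g_L}\ket{g_L}$---matches the paper's argument closely and would go through.

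The Clifford-hierarchy step, however, has a genuine gap. You propose to ``mirror Proposition~1'' by iterating conjugation by $X\otimes I$ and $I\otimes X$ and watching the denominators drop. Two things go wrong. First, once you compute $A = U_L(X\otimes I)U_L^{\dagger}$ the result is no longer diagonal (it is block off-diagonal), so you cannot simply iterate the same conjugation and remain in the class of matrices you started with; the paper handles this by conjugating $A$ with CNOTs---a Clifford move that preserves hierarchy level by \cite{Zeng2008}---to return to diagonal form before invoking the inductive hypothesis. Second, and more substantively, the bare assertion that ``denominators reduce by one power of~$2$'' is not automatic. After the Clifford reduction the relevant phases are $(\alpha+\beta-2\gamma)/2^{k}$, $2\beta/2^{k}$, and $(\alpha-\beta-2\gamma)/2^{k}$, and at face value these still have denominator $2^{k}$. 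The paper's actual work here is a parity analysis: writing $\alpha = 2k_{\alpha}+1$, $\beta = 2k_{\beta}+1$ (using the biconditional to assume both are odd), one sees that $\alpha\pm\beta$ are even, dropping the denominator to $2^{k-1}$, and then that exactly one of $(k_{\alpha}+k_{\beta})-(\gamma-1)$ and $(k_{\alpha}-k_{\beta})-\gamma$ is even, so the reduced matrix again has the structural form $\mathrm{diag}(1,e^{i\pi\alpha'/2^{k-1}},e^{i\pi\beta'/2^{k-1}},e^{i\pi\gamma'/2^{k-2}})$ required by the induction hypothesis. Without this parity bookkeeping the induction does not close, and nothing in your sketch supplies it.
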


\begin{proof}
The first result of the proof is proved in the above section by the resulting constraints on the angles that can be implemented transversally. More specifically, since $Z_L(1/2^k)$ can be implemented transversally, we know that the generators of stabilizer of the code must satisfy,
\begin{align*}
|g_{i_1}| &= 0 \mod 2^k \\
2|g_{i_1} \wedge g_{i_2}| &= 0 \mod 2^k \\
&\vdots \\
2^{k-1}|g_{i_1} \wedge \hdots \wedge g_{i_k}| &= 0 \mod 2^k,
\end{align*}
for all choices of valid indices~$\{i_1,\hdots i_k\}$. Moreover, since the code is constrained to not be able to implement~$Z(1/2^{k+1})$ transversally, we know the following must be true for some choice of indices~$\{ \mu_1, \hdots , \mu_{k+1}\}$,
\begin{align*}
2^k |g_{\mu_1} \wedge \hdots \wedge g_{\mu_{k+1}}| \ne 0 \mod 2^{k+1}.
\end{align*}
Therefore, given the resulting set of constraints on the angle difference~$\theta_{11}' = \theta_{11}-\theta_{01} - \theta_{10}$,
\begin{align*}
|g_{i_1} \wedge g_{j}| &= 0 \mod q \\
2|g_{i_1} \wedge g_{i_2}\wedge g_{j}| &= 0 \mod q \\
&\vdots \\
2^{k-1}|g_{i_1} \wedge \hdots \wedge g_{i_k} \wedge g_{j}| &= 0 \mod q,
\end{align*}
where $1/q$ is the desired angular rotation. The above conditions will not be able to be satisfied for $q = 2^k$ as for the indices $\{ \mu_1, \cdots, \mu_{k+1} \}$ the following would lead to a contradiction with the final condition:
\begin{align*}
2^k |g_{\mu_1} \wedge \hdots \wedge g_{\mu_{k+1}}| \ne 0 \mod 2^{k+1} \Longrightarrow 
2^{k-1} |g_{\mu_1} \wedge \hdots \wedge g_{\mu_{k+1}}| \ne 0 \mod 2^{k}.
\end{align*}
Conversely, we know that we can satisfy the above equations for $q=2^{k-1}$ by using an implication from the single-qubit conditions that must be satisfied for all indices~$\{i_1,\hdots i_k\}$,
\begin{align*}
2^l |g_{i_1} \wedge \hdots \wedge g_{i_{l+1}}| = 0 \mod 2^{k} \Longrightarrow 
2^{l-1} |g_{i_1} \wedge \hdots \wedge g_{i_{l+1}}| = 0 \mod 2^{k-1}.
\end{align*}
Therefore, combining the results for the single qubit gates on the same restrictions and the above observations we know the angle~$\theta_{11}'$ is restricted to have the form~$c/2^{k-1}$. Since the angles~$\theta_{01}$ and~$\theta_{10}$ satisfy the same restrictions as the single qubit block case, this completes the proof of the first claim of Theorem~\ref{thm:multiq} regarding the allowable angles which a multi-qubit diagonal gate can implementing a logical multi-qubit gate via a transversal application of the chosen gates. In order to obtain a description of the two-qubit logical gate, we can consider the set of equations that provide the restrictions on the allowable angles in order to obtain an explicit description of the logical angle that can be applied.
\begin{align*}
U^{\otimes n} \ket{00}_L &= e^{i \pi n\theta_{00}} \ket{00}_L = \ket{00} \\
U^{\otimes n} \ket{01}_L &= e^{i \pi |g_i| \theta_{01}} \ket{01}_L = e^{i \pi a |g_i|/2^{k}}  \ket{01}_L = e^{i \pi \alpha/2^{k}}  \ket{01}_L \\
U^{\otimes n} \ket{10}_L &= e^{i \pi |g_i| \theta_{10}} \ket{10}_L = e^{i \pi b |g_i|/2^{k}}  \ket{10}_L = e^{i \pi \beta /2^{k}}  \ket{10}_L \\
U^{\otimes n} \ket{11}_L &= e^{i \pi |g_i \wedge g_j | \theta_{11}} \ket{01}_L = e^{i \pi  |g_i| (\theta_{01} + \theta_{10} + \theta_{11}')}  \ket{11}_L = e^{i \pi (a + b +2c) |g_i| /2^{k}}  \ket{11}_L = e^{i \pi (\alpha+ \beta +2\eta) /2^{k}}  \ket{11}_L
\end{align*}
The above equations must hold for any choice of the weight of the individual (or pairs) of stabilizer generators and in the last equation we have chosen $g_j$ to equal $g_i$ as we have a freedom over which $j$ we choose. We have also introduced the integers $\alpha = a |g_i|$, $\beta = b |g_i|$,~and $\eta = c |g_i|$. Consider the angle that is applied by the logical operation to the state~$\ket{11}_L$ in more detail, $(\alpha + \beta + 2\eta)/2^k$. If both~$\alpha$ and $\beta$ are odd, then the overall angle will be of the form~$\gamma/2^{k-1}$. If either $\alpha$ or $\beta$ are even (or zero), but not both, then the angle will have the form~$\gamma/2^k$; however this would mean that the other angle could then be expressed in its most reduced form as~$\alpha'/2^{k-1}$. Finally, if both are even, it follows that all these angles can be reduced and shown to be proportional to~$1/2^{k-1}$. Therefore, up to a relabelling of logical basis states (which can be achieved using either a logical~$X_L$ or $CNOT$ gate), the two-qubit logical gate can be expressed in the form
\begin{align*}
U_L = \begin{pmatrix}
1 & 0 & 0 & 0 \\
0 & e^{i \pi \alpha/2^{k}} & 0 & 0 \\
0 & 0 & e^{i \pi \beta/2^{k}} & 0 \\
0 & 0 & 0 & e^{i \pi \gamma/2^{k-1}}
\end{pmatrix},
\end{align*}
where $\alpha = 0 \Leftrightarrow \beta = 0$, such that either both phases are zero or the two phases are proportional to~$1/2^{k}$ (in the case when only one is zero, then the gate will be a product of a single logical qubit rotation proportional to~$Z(1/2^{k})$ and a controlled-$Z(1/2^{k-1})$ which are both in~$\cal{C}_2^{k}$). To prove the final statement of Theorem \ref{thm:multiq} we must show that the above gate is contained within the Clifford hierarchy at the $(k+1)$-th level of the two-qubit Clifford hierarchy, $U_L \in \cal{C}_2^{k+1}$.

We shall prove this by induction. Begin with the case $k=1$, and without loss of generality, assume $\alpha < \beta$, that both are not equal to zero, and that the angles are written in their most reduced form (if both $\alpha = \beta = 0$. Now the proof of the base case is trivial, as it becomes a controlled-$Z$ gate which is clearly in~$\cal{C}_2^{(2)}$, a Clifford gate). We can rewrite the logical gate as
\begin{align*}
U_L &= \begin{pmatrix}
1 & 0 & 0 & 0 \\
0 & e^{i \pi \alpha/2^{k}} & 0 & 0 \\
0 & 0 & e^{i \pi \beta/2^{k}} & 0 \\
0 & 0 & 0 & e^{i \pi \gamma/2^{k-1}}
\end{pmatrix} = 
\begin{pmatrix}
1 & 0 & 0 & 0 \\
0 & e^{i \pi \alpha/2^{k}} & 0 & 0 \\
0 & 0 & e^{i \pi \alpha/2^{k}} & 0 \\
0 & 0 & 0 & 1
\end{pmatrix} 
\begin{pmatrix}
1 & 0 & 0 & 0 \\
0 & 1 & 0 & 0 \\
0 & 0 & e^{i \pi (\beta-\alpha)/2^{k}} & 0 \\
0 & 0 & 0 & e^{i \pi \gamma/2^{k-1}}
\end{pmatrix} \\
&= 
\begin{pmatrix}
1 & 0 & 0 & 0 \\
0 & e^{i \pi \alpha/2^{k}} & 0 & 0 \\
0 & 0 & e^{i \pi \alpha/2^{k}} & 0 \\
0 & 0 & 0 & 1
\end{pmatrix} 
\begin{pmatrix}
1 & 0 & 0 & 0 \\
0 & 1 & 0 & 0 \\
0 & 0 & e^{i \pi (\beta-\alpha)/2^{k}} & 0 \\
0 & 0 & 0 & e^{i \pi (\beta-\alpha)/2^{k}}
\end{pmatrix}
\begin{pmatrix}
1 & 0 & 0 & 0 \\
0 & 1 & 0 & 0 \\
0 & 0 & 1& 0 \\
0 & 0 & 0 & e^{i \pi (2\gamma + \alpha - \beta)/2^{k}}
\end{pmatrix}.
\end{align*}
The above sequence of unitaries can be expressed as the following sequence of gates:
\begin{align*}
\Qcircuit @C=1.5em @R=1.3em {
& \ctrl{1}	& \qw				& \ctrl{1}	& \gate{Z((\beta-\alpha)/2^k)}	& \ctrl{1} & \qw \\
& \targ	& \gate{Z(\alpha/2^k)}	&\targ	& \qw		& \gate{Z( (2\gamma + \alpha - \beta)/2^k) } & \qw.
}
\end{align*}
In this case, $k=1$, and all of the single qubit gates are achieved by repeated action of the Clifford phase gate~$S = \text{diag}(1,i)$. The two-qubit coupling gate is actually the application of a controlled-$Z$ gate since $\alpha \ne 0$, $\beta \ne 0$, and both are odd; therefore, their difference is even and the gate can be expressed in the form~$Z(\zeta/2)$. Since all of these gates are in~$\cal{C}_2^{(2)}$ and the first two levels of the Clifford hierarchy form a group, the resulting composition is an element of~$\cal{C}_2^{(2)}$.

Assume the claim holds for $k-1$; we will now show that it holds for $k$. By definition, if $U_L \in \cal{C}_2^{(2)}$ it must map any two-qubit Pauli to an element in~$\cal{C}_2^{(k-1)}$ when conjugating by~$U_L$. We need only consider the action of~$U_L$ on the Pauli~$X$ elements, as the action on Pauli-$Z$ is trivial since diagonal gates commute. Consider the following:
\begin{align*}
U_L (X \otimes I) U_L^{\dagger} &= 
\begin{pmatrix}
1 & 0 & 0 & 0 \\
0 & e^{i \pi \alpha/2^{k}} & 0 & 0 \\
0 & 0 & e^{i \pi \beta/2^{k}} & 0 \\
0 & 0 & 0 & e^{i \pi \gamma/2^{k-1}}
\end{pmatrix}
\begin{pmatrix}
0 & 0 & 1 & 0 \\
0 & 0 & 0 & 1 \\
1 & 0 & 0 & 0 \\
0 & 1 & 0 & 0
\end{pmatrix}
\begin{pmatrix}
1 & 0 & 0 & 0 \\
0 & e^{-i \pi \alpha/2^{k}} & 0 & 0 \\
0 & 0 & e^{-i \pi \beta/2^{k}} & 0 \\
0 & 0 & 0 & e^{-i \pi \gamma/2^{k-1}}
\end{pmatrix} \\
&=
\begin{pmatrix}
0 & 0 & e^{-i \pi \beta/2^{k}} & 0 \\
0 & 0 & 0 & e^{i \pi (\alpha-2\gamma)/2^{k}} \\
e^{i \pi \beta/2^{k}}  & 0 & 0 & 0 \\
0 & e^{-i \pi (\alpha-2\gamma)/2^{k}} & 0 & 0
\end{pmatrix} = A.
\end{align*}
Through the action of CNOT gates, we can map the above operator to the following: 
\begin{align*}
e^{-i\pi \beta/2^k}
\begin{pmatrix}
1 & 0 & 0 & 0 \\
0 & e^{i \pi (\alpha + \beta -2\gamma)/2^{k}} & 0 & 0 \\
0 & 0 & e^{i \pi 2\beta/2^{k}} & 0 \\
0 & 0 & 0 & e^{-i \pi (\alpha- \beta -2\gamma)/2^{k}} 
\end{pmatrix}.
\end{align*}
Note that the left or right action of any Clifford gate will not change the level of an element in the Clifford hierarchy, as proven in Prop. 3 in Ref~\cite{Zeng2008}. Therefore we can show that the above gate is in~$\cal{C}_2^{(k-1)}$ which is equivalent to showing that~$A \in \cal{C}_2^{(k-1)}$. We know we can write the integers~$\alpha$ and~$\beta$ as $\alpha = 2k_\alpha+ 1$ and $2k_\beta +1$. Consider the following angular expressions:
\begin{align*}
\dfrac{\alpha + \beta - 2\gamma}{2^k} &= \dfrac{2(k_\alpha + k_\beta) - 2\gamma + 2}{2^k} = \dfrac{(k_\alpha + k_\beta) - (\gamma -1 )}{2^{k-1}}, \\ 
\dfrac{\alpha - \beta - 2\gamma}{2^k} &= \dfrac{2(k_\alpha - k_\beta) - 2\gamma }{2^k} = \dfrac{(k_\alpha - k_\beta) - \gamma}{2^{k-1}}. 
\end{align*}
Since $(k_\alpha + k_\beta)$ is even, if and only if $(k_\alpha - k_\beta)$ is even, one of the numerators in the final expression will be even, and as such, one of the above angles will necessarily be of the form~$1/2^{k-2}$. Therefore, up to a logical Clifford operation (which preserves the level of the Clifford hierarchy), the gate~$A$ will have the form:
\begin{align*}
\begin{pmatrix}
1 & 0 & 0 & 0 \\
0 & e^{i \pi \alpha'/2^{k-1}} & 0 & 0 \\
0 & 0 & e^{i \pi \beta'/2^{k-1}} & 0 \\
0 & 0 & 0 & e^{i \pi \gamma'/2^{k-2}} 
\end{pmatrix},
\end{align*}
which by the induction hypothesis is an element of~$\cal{C}_2^{(k-1)}$. Finally, we must show the same property for the following mapping:
\begin{align*}
U_L (I \otimes X) U_L^{\dagger} &= 
\begin{pmatrix}
1 & 0 & 0 & 0 \\
0 & e^{i \pi \alpha/2^{k}} & 0 & 0 \\
0 & 0 & e^{i \pi \beta/2^{k}} & 0 \\
0 & 0 & 0 & e^{i \pi \gamma/2^{k-1}}
\end{pmatrix}
\begin{pmatrix}
0 & 1 & 0 & 0 \\
1 & 0 & 0 & 0 \\
0 & 0 & 0 & 1 \\
0 & 0 & 1 & 0
\end{pmatrix}
\begin{pmatrix}
1 & 0 & 0 & 0 \\
0 & e^{-i \pi \alpha/2^{k}} & 0 & 0 \\
0 & 0 & e^{-i \pi \beta/2^{k}} & 0 \\
0 & 0 & 0 & e^{-i \pi \gamma/2^{k-1}}
\end{pmatrix} \\
&=
\begin{pmatrix}
0 & e^{-i \pi \alpha/2^{k}} & 0 & 0 \\
e^{i \pi \alpha/2^{k}} & 0 & 0 & 0  \\
0  & 0 & 0 & e^{i \pi (\beta-2\gamma)/2^{k}} \\
0 & 0 & e^{-i \pi (\beta-2\gamma)/2^{k}} & 0
\end{pmatrix} = B.
\end{align*}
Up to logical Clifford operations, the  gate~$B$ has the following form:
\begin{align*}
\begin{pmatrix}
1 & 0 & 0 & 0 \\
0 & e^{i \pi 2\alpha/2^{k}} & 0 & 0  \\
0  & 0 &  e^{i \pi (\alpha \beta-2\gamma)/2^{k}} & 0 \\
0 & 0 & 0 & e^{-i \pi (\beta - \alpha -2\gamma)/2^{k}}
\end{pmatrix}.
\end{align*}
It is fairly straightforward to see that this has the same form as the case above when the roles of~$\alpha$ and~$\beta$ are exchanged; therefore $B \in \cal{C}_2^{(k-1)}$ by the induction hypothesis. Furthermore, since~$U_L (Z \otimes I) U_L^{\dagger} = Z \otimes I $ and~$U_L (I \otimes Z) U_L^{\dagger} = I \otimes Z $, we conclude that $U_L \in \cal{C}_2^{(k)}$, thus proving the induction hypothesis correct.
\end{proof}

It is fairly straightforward to note that the equivalent of Proposition~\ref{prop:GeneralStabilizer} will also apply in the two-qubit case. That is, the gate restrictions will also apply to general transversal operations and not just to those that are strongly transversal by using the~\textit{Decompression Lemma}.

\section{Conclusion}

Zeng \etal~classified the set of single-qubit logical transversal gates~\cite{Zeng:2007a}, showing that they must result from the application of single-qubit diagonal gates in addition to possible local Clifford operations and permutations (SWAP gates). In this work we have characterized the set of individual diagonal gates that can result in the application of a non-trivial logical gate, concluding that all of the entries must be of the form~$e^{i \pi c/2^k}$. This severely limits the set of logical gates that can be implemented in a transversal manner for qubit stabilizer codes. It also provides an important result for fault-tolerant quantum computing, as it rules out the possibility of finding transversal implementations for important gates in certain decomposition algorithms, such as the~$V$ gate. It also places restrictions on new fault-tolerance schemes which thus far have used a combination of codes to achieve fault-tolerant quantum computation.

Additionally, we have extended our analysis to two-qubit logical gates through the use of two-qubit physical diagonal gates, showing that a very similar restriction holds. In fact, in both the single and two-qubit case, the logical gates that can be implemented by transversal diagonal gate application must belong to the Clifford hierarchy, and moreover, both the single and two-qubit gates that can be implemented for a given code must reside at the same level of the hierarchy. We conjecture that this is true for all multi-qubit gates.

Open questions for future research would be to classify the set of physical diagonal gates that can implement a non-trivial logical gate for qudit systems. Additionally, it would be interesting to consider the set of logical gates that can be generated by coupling two codeblocks corresponding to different quantum error correcting codes, and determine if the same logical gate restrictions apply. Classifying the set of transversal gates for other types of codes is another interesting direction for future research which could provide insight into ways to circumvent the gate restrictions introduced in this work.

Additionally, most, if not all, magic state distillation schemes use CSS codes to distill purer magic states. These schemes use stabilizer codes with strongly transversal gates directly related to the magic state which the scheme distills. Our results suggest that magic state distillation, at least by current methods, can only distill gates in the Clifford hierarchy. Note that other gates may still be approximated by combining different magic states since we are able to distill a universal set of magic states.

\section{Acknowledgments}
T.~J.-O. would like to acknowledge the support of CryptoWorks21, FQRNT, NSERC and the Vanier Canada Graduate scholarship programs. 

J.~A. would like to acknowledge the support of ARO and Lockheed Martin Corporation. The effort depicted is supported in part by the U.~S. Army Research Office under contract W911NF-14-C-0048. The content of the information does not necessarily reflect the position or the policy of the Government, and no official endorsement should be inferred.

The authors would like to thank Bryan~Eastin, David~Poulin, and Nicolas~Delfosse for their insightful feedback on this work.

\appendix

\section{Proof of rationality}\label{app:rational}

In this case we allow each $Z(\theta_i)$ to be a $Z$ rotation about any angle, not just a rational angle. Without loss of generality we can assume each $\theta_i$ is in the range $(-1,+1)$ and $\theta_i\ne 0$ (since we can just use a new code with that qubit removed). 

Now we have a transversal gate of the form:
\begin{equation}
Z_L(\theta') := Z(\theta_1)\otimes Z(\theta_2)\otimes...\otimes Z(\theta_n).
\end{equation}

We will also assume that at least one of the angles is irrational as we have already solved the rational case. 

Constraints from $Z_L(\theta')|0_L\rangle = |0_L\rangle$ restrict as follows:
\begin{align*}
 \vec{\theta}\cdot g_{i_1}^{T} &= 0 \\
 \vec{\theta}\cdot (g_{i_1}\oplus g_{i_2})^{T} &= 0\\
 \vdots& \\
 \vec{\theta}\cdot(g_{i_1}\oplus...\oplus g_{i_{|G_X|}})^{T} &= 0,
 \end{align*}
while constraints from $Z_L(\theta')|1_L\rangle = e^{i\pi\theta}|1_L\rangle$ provide the following:
 \begin{align*}
 \vec{\theta}\cdot g_{X_L}^{T} &= \theta\\
 \vec{\theta}\cdot (g_{X_L}\oplus g_{i_1})^{T} &= \theta\\
 \vdots& \\
\vec{\theta}\cdot (g_{X_L}\oplus g_{i_1}\oplus...\oplus g_{i_{|G_X|}})^{T} &= \theta\\
& \forall 0<i_1<i_2<...<i_{|G_X|}\le|G_X|.
\end{align*}

Here the equality is taken over the Real numbers if at least one term in the sum $\vec{\theta}\cdot g_{i}^T$ is irrational, otherwise the equality is modulo some integer as before. 

Some observations:
\begin{enumerate}
  \item If $\theta_i = \frac{p}{q}\theta_j$, then if $\theta_i + \frac{p}{q}\theta_j=0 \implies \frac{\theta_i}{q}(q+p)=0 \implies \theta'(q+p)=0.$ Here $\theta'=\theta_i/q$. We can use the decompression lemma to create a new code where $Z_L$ applies $Z(\theta')$ to $p+q$ qubits.
  
   \item If $\theta_i \ne \frac{p}{q}\theta_j$, then $\theta_i + \theta_j = 0$ iff $\theta_i=0$ and $\theta_j=0$. Notice that $\theta_i$ and $\theta_j$ could be two irrational numbers which are not proportional or an irrational and a rational number (which by definition are not proportional).
   
   \item We can use these observations to reorder the qubits in the code (and possibly apply the decompression lemma to create a new code) to write $\vec{\theta}$ as $Z(1/q)\otimes... \otimes Z(1/q)\otimes Z(\theta_1)\otimes...\otimes Z(\theta_1)\otimes Z(\theta_2)...$. Here the $Z(1/q)$ are from the rational part of $Z_L$ (with $q$ a common denomonator) and $Z(\theta_i)$ are the irrational part of $Z_L$. We have used the decompression lemma to express proportional irrational angles as the same $\theta_i$. Each different $i$ corresponds to irrational angles which are not proportional. 
   
   \item Using the second observation we see that the rational angles and each set of proportional irrational angles must individually satisfy the above constraints. We have already discussed the allowable solutions given rational angles. In what follows we will show that no nontrivial solutions exist given irrational angles. 
\end{enumerate}

For each $\theta_i$ we will have constraints from $Z_L(\theta_i)|0_{L|\theta_i}\rangle = |0_{L|\theta_i}\rangle$ such that,

\begin{align*}
 \vec{\theta_i}\cdot g_{i_1|\theta_i}^{T} &= 0 \\
 \vec{\theta_i}\cdot (g_{i_1|\theta_i}\oplus g_{i_2|\theta_i})^{T} &= 0\\
 \vdots& \\
 \vec{\theta_i}\cdot(g_{i_1|\theta_i}\oplus...\oplus g_{i_{|G_X|}|\theta_i})^{T} &= 0,
 \end{align*}
while constraints from $Z_L(\theta_i)|1_{L|\theta_i}\rangle = e^{i\pi\theta}|1_{L|\theta_i}\rangle$ provide the following,
 \begin{align*}
 \vec{\theta_i}\cdot g_{X_L|\theta_i}^{T} &= \theta\\
 \vec{\theta_i}\cdot (g_{X_L|\theta_i}\oplus g_{i_1|\theta_i})^{T} &= \theta\\
 \vdots& \\
\vec{\theta_i}\cdot (g_{X_L|\theta_i}\oplus g_{i_1|\theta_i}\oplus...\oplus g_{i_{|G_X|}|\theta_i})^{T} &= \theta\\
& \forall 0<i_1<i_2<...<i_{|G_X|}\le|G_X|.
\end{align*}

Here $|0_{L|\theta_i}\rangle$ refers to the restriction to qubits which $\vec{\theta_i}$ acts nontrivially upon. Note that it is possible that $\theta=0$ for some set of proportional irrational angles. As long as $\theta\ne 0$ for some set of proportional irrational angles, then the irrational part of $\vec{\theta}$ has contributed nontrivially $Z_L(\theta')$. We will only consider the case when $\theta\ne 0$ as the other case is trivial (equivalent to applying the identity). 

Now, we will try to find a set of rows of $H_X$ and $X_L$ which satisfy all these conditions. For the underlying code to be nontrivial we require that $H_X$ has no zero columns. We assume that $a\ne 0$, otherwise the transversal operator is trivial $(Z_{L}(\theta')=I)$.

If there is only one row $h_1$ then it must be all ones and
\begin{align*}
\vec{\theta}\cdot g_{1}^{T}&=0\\
\vec{\theta}\cdot g_{X_L}^{T}&\ne 0\\
\vec{\theta}\cdot (g_1\wedge g_{X_L})^{T}&=0,
\end{align*}
but $\vec{\theta}\cdot (g_1\wedge g_{X_L})^{T}=\vec{\theta}\cdot g_{X_L}^{T}=0$ and we have a contradiction.

If $H_X$ is nontrivial and has two rows, the columns of $H_X$ are one of three types:

\begin{equation}
	a = \begin{bmatrix}
    1\\0
    \end{bmatrix},
    b = \begin{bmatrix}
    0\\
    1 
    \end{bmatrix},
    c = \begin{bmatrix}
    1\\
    1 
    \end{bmatrix}.
\end{equation}
We will refer to the combination of all columns of type $a,b, c$, by the matrix $A,B,C$, respectively.

If we have a logical operator $X_L$, then 

\begin{align*}
 \vec{\theta}\cdot(g_1\wedge g_{X_L})^{T} &= \theta(\Delta w_A + \Delta w_C) = 0,\\
 \vec{\theta}\cdot (g_2\wedge g_{X_L})^{T} &= \theta(\Delta w_B + \Delta w_C) = 0,\\
 \vec{\theta}\cdot (g_1\wedge g_2\wedge g_{X_L})^{T} &= \theta(\Delta w_C) = 0,\\
 \vec{\theta}\cdot X_L &= \theta(\Delta w_A + \Delta w_B + \Delta w_C) \ne 0.
\end{align*}

Here, $\Delta w_A = w_A^{+}-w_A^{-}$ and $w_A^{+}(w_A^{-})$ is the overlap of $A$ and $X_L$ which has support on $H_{X}^{+}(H_{X}^{-})$. Since $\theta\ne 0$ the first three constraints imply that $\Delta w_A, \Delta w_B, \Delta w_C = 0$ which imply $|X_L| = 0$ and hence a contradiction.

As we can see the proof proceeds in the same manner as in Sec.~\ref{sssec:Existence} with $w_i$ replaced by $\Delta w_i$. We reach the same contradiction given any set of proportional irrational angles and have, therefore, proven that transversal gates with single qubit rotations by irrational angles have no effect and are equivalent to applying the identity. 

\bibliographystyle{unsrt}

\begin{thebibliography}{10}

\bibitem{Shor:1995a}
Peter~W Shor.
\newblock {Scheme for reducing decoherence in quantum computer memory}.
\newblock {\em Phys. Rev. A}, 52(4):2493, 1995.

\bibitem{Steane:1996b}
Andrew Steane.
\newblock {Multiple particle interference and quantum error correction}.
\newblock {\em Proc. Roy. Soc. London A}, 452(1954):2551--2577, 1996.

\bibitem{Bennett:1996a}
Charles~H. Bennett, David~P. DiVincenzo, John~A. Smolin, and William~K.
  Wootters.
\newblock Mixed-state entanglement and quantum error correction.
\newblock {\em Phys. Rev. A}, 54:3824--3851, Nov 1996.

\bibitem{Knill:1997a}
Emmanuel Knill and Raymond Laflamme.
\newblock {Theory of quantum error-correcting codes}.
\newblock {\em Phys. Rev. A}, 55(2):900--911, 1997.

\bibitem{Shor:1996a}
Peter~W Shor.
\newblock {Fault-tolerant quantum computation}.
\newblock In Regina~Spencer Sipple, editor, {\em Proceedings of the 37th Annual
  Symposium on Foundations of Computer Science}, pages 56--65, 14--16 Oct.
  1996, Burlington, VT, USA, 1996. IEEE, IEEE Press, Los Alamitos, CA.

\bibitem{Preskill:1998c}
John Preskill.
\newblock {Reliable quantum computers}.
\newblock {\em Proc. Roy. Soc. London A}, 454(1969):385--410, 1998.

\bibitem{Knill:1998a}
Emmanuel Knill, Raymond Laflamme, and Wojciech~H Zurek.
\newblock {Resilient quantum computation: Error models and thresholds}.
\newblock {\em Proc. Roy. Soc. London A}, 454(1969):365--384, 1998.

\bibitem{Aharonov:1997a}
Dorit Aharonov and Michael Ben-Or.
\newblock {Fault tolerant quantum computation with constant error}.
\newblock In F~Tom Leighton and Peter Shor, editors, {\em Proceedings of the
  Twenty-Ninth Annual {ACM} Symposium on the Theory of Computing}, pages
  176--188, El Paso, TX, USA, 1997. ACM Press, New York.

\bibitem{Knill:2004a}
Emmanuel Knill.
\newblock {Quantum computing with realistically noisy devices}.
\newblock {\em Nature}, 434:39--44, 2005.

\bibitem{Zeng:2007a}
Bei Zeng, Andrew Cross, and Isaac~L Chuang.
\newblock {Transversality versus universality for additive quantum codes}.
\newblock {\em IEEE Transactions on Information Theory}, 57(9):6272--6284,
  September 2011.

\bibitem{Eastin:2008a}
Bryan Eastin and Emmanuel Knill.
\newblock {Restrictions on transversal encoded quantum gate sets}.
\newblock {\em Phys. Rev. Lett.}, 102(11):110502, 2009.

\bibitem{Bravyi2013}
Sergey Bravyi and Robert Konig.
\newblock {Classification of topologically protected gates for local stabilizer
  codes}.
\newblock {\em Phys. Rev. Letters}, 110, 2013.

\bibitem{Paetznick:2013a}
Adam Paetznick and Ben~W. Reichardt.
\newblock Universal fault-tolerant quantum computation with only transversal
  gates and error correction.
\newblock {\em Phys. Rev. Lett.}, 111:090505, Aug 2013.

\bibitem{Bombin:2013a}
Hector Bombin.
\newblock Optimal transversal gates under geometric constraints.
\newblock {\em arXiv preprint arXiv:1311.0879}, 2013.

\bibitem{Jochym:2014a}
Tomas Jochym-O'Connor and Raymond Laflamme.
\newblock Using concatenated quantum codes for universal fault-tolerant quantum
  gates.
\newblock {\em Phys. Rev. Lett.}, 112(1):010505, 2014.

\bibitem{Anderson:2014a}
Jonas~T. Anderson, Guillaume Duclos-Cianci, and David Poulin.
\newblock Fault-tolerant conversion between the steane and reed-muller quantum
  codes.
\newblock {\em arXiv preprint arXiv:1403.2734}, 2014.

\bibitem{Bocharov:2013a}
Alex Bocharov, Yuri Gurevich, and Krysta~M. Svore.
\newblock Efficient decomposition of single-qubit gates into v basis circuits.
\newblock {\em Phys. Rev. A}, 88(1):012313, 2013.

\bibitem{Pastawski:2014a}
Fernando Pastawski and Beni Yoshida.
\newblock {Fault-tolerant logical gates in quantum error-correcting codes}.
\newblock {\em arXiv preprint arXiv:1408.1720}, August 2014.

\bibitem{Gottesman:1996a}
Daniel Gottesman.
\newblock {Class of quantum error-correcting codes saturating the quantum
  {H}amming bound}.
\newblock {\em Phys. Rev. A}, 54(3):1862--1868, 1996.

\bibitem{Gottesman:1999b}
Daniel Gottesman.
\newblock {The Heisenberg representation of quantum computers}.
\newblock In S~P Corney, R~Delbourgo, and P~D Jarvis, editors, {\em Group22:
  Proceedings of the XXII International Colloquium on Group Theoretical Methods
  in Physics}, pages 32--43, 13--17 Jul. 1998, Hobart, Australia, 1999.
  International Press, Cambridge, MA.

\bibitem{Steane:1996a}
A~M Steane.
\newblock {Error correcting codes in quantum theory}.
\newblock {\em Phys. Rev. Lett.}, 77(5):793--797, 1996.

\bibitem{Calderbank:1996a}
A~R Calderbank and Peter~W Shor.
\newblock {Good quantum error-correcting codes exist}.
\newblock {\em Phys. Rev. A}, 54(2):1098--1105, 1996.

\bibitem{Gottesman:1999d}
Daniel Gottesman and Isaac~L. Chuang.
\newblock {Quantum Teleportation is a Universal Computational Primitive}.
\newblock page~6, August 1999.

\bibitem{Bravyi:2012a}
Sergey Bravyi and Jeongwan Haah.
\newblock Magic-state distillation with low overhead.
\newblock {\em Phys. Rev. A}, 86(5):052329, 2012.

\bibitem{Zeng2008}
Bei Zeng, Xie Chen, and Isaac~L. Chuang.
\newblock Semi-clifford operations, structure of $\mathcal{C}_{k}$ hierarchy,
  and gate complexity for fault-tolerant quantum computation.
\newblock {\em Phys. Rev. A}, 77:042313, Apr 2008.

\end{thebibliography}

\end{document}